\newtheorem{Thm}{Theorem}
\newtheorem{Cor}{Corollary}
\newtheorem{Lem}[Thm]{Lemma}
\newtheorem{Prop}{Proposition}
\theoremstyle{definition}
\newtheorem{Def}{Definition}
\newcommand{\bra}[1]{{\left\langle #1 \right|}}
\newcommand{\ket}[1]{{\left| #1 \right\rangle}}
\newcommand{\R}{\mbox{$\mathbb R$}}
\newcommand{\T}{\mbox{$\mathrm{tr}$}}
\begin{document}
\title{Generalized Entropy and Global Quantum Discord in Multi-party Quantum systems}

\author{Dong Pyo Chi}
\affiliation{
 Department of Mathematical Sciences, Seoul National University, Seoul 151-742, Korea
}

\author{Jeong San Kim}
\email{freddie1@suwon.ac.kr} \affiliation{
 Department of Mathematics, University of Suwon, Kyungki-do 445-743, Korea
}

\author{Kyungjin Lee}
\affiliation{
 Department of Mathematical Sciences, Seoul National University, Seoul 151-742, Korea
}
\date{\today}

\begin{abstract}
Using Tsallis-$q$ entropy, we introduce the generalized concept of global quantum discord, namely the $q$-global quantum discord,
and provide its analytic evaluation for two classes of multi-qubit states.
We also provide a sufficient condition, for which the pairwise quantum correlations in terms of $q$-global quantum discord is monogamous in multi-party quantum systems.
\end{abstract}

\pacs{
03.67.Mn,  
03.65.Ud 
}
\maketitle

\section{Introduction}

Traditionally, quantum entanglement was believed as the unique quantum correlation
that enables us to overcome the limit of classical computational models~\cite{tele, qkd1, qkd2}.
However, it has been recently shown that entanglement is not the only quantum correlation that can be used to obtain a
quantum speed-up; there exist quantum computational models such as
the deterministic quantum computation with one qubit (DCQ1), which only uses separable states~\cite{KL,DFC,DV}.
The resource believed to provide an enhancement in this computational task is {\em quantum discord} (QD)~\cite{OZ,HV}.

Besides quantum computational protocols, many other remarkable applications
of QD have also been proposed such as the characterization of quantum phase transitions~\cite{Sar} and the dynamics of quantum systems
under decoherence~\cite{Maz}. QD is thus identified as a general resource in quantum information processing.

Since the original definition of QD in bipartite quantum systems~\cite{OZ},
which considers a set of local measurement only on one subsystem,
a symmetric extension of QD was suggested, namely {\em global quantum discord} (GQD)~\cite{RS}, with analytical expressions
for some classes of quantum states~\cite{Xu}. Whereas QD is defined in terms of von Neumann entropy of quantum states,
a generalized version of QD in terms of Tsallis entropy has
also been proposed for bipartite quantum systems~\cite{MPP,J}.

For efficient applications of QD as a resource in quantum information and communication protocols,
it is an important task to characterize its possible distribution or shareability in multi-party quantum systems.
For examples, quantum entanglement cannot be shared freely in multi-party quantum systems, and
this restricted shareability of entanglement is known as the {\em monogamy of entanglement} (MoE)~\cite{T04}.

Mathematically, MoE in multi-party quantum systems is characterized as a trad-off inequality in terms of
bipartite entanglement measures.
The first monogamy inequality of entanglement was established in three-qubit systems using tangle as the bipartite
entanglement measure~\cite{CKW}. Since then, there have been intensive
research contributed on this topic for possible generalization of the monogamy inequality into multi-party higher-dimensional quantum systems~\cite{OV}.
Recently, it has been shown that the entanglement monogamy inequality holds for arbitrary dimensional multi-party
quantum systems in terms of the squashed entanglement~\cite{BCY10}.

MoE plays a crucial role in many quantum information processing tasks.
In quantum key-distribution protocols, for example, the possible amount of information an eavesdropper could obtain about
the secret key can be restricted by MoE, which is the fundamental concept of security proof~\cite{Paw}.
Thus the founding principle of quantum cryptographic schemes that an eavesdropper cannot obtain any information
without disturbance is guaranteed by MoE, the law of quantum physics, rather than assumptions on the difficulty of computation.

Because MoE is a property of a typical quantum correlation, quantum entanglement, without any classical counterpart,
it is also natural and meaningful to investigate other quantum correlations such as QD or GQD can have such restricted
shareability or distribution in multi-party quantum systems, which still has many important open questions.~\cite{CKW,OV,Z,K}.

In this paper, using Tsallis-$q$ entropy, we propose a one-parameter class of quantum correlation measures,
{\em $q$-global quantum discord} ($q$-GQD), which include GQD as a special case.
We provide an analytic expression of $q$-GQD for some classes of multi-qubit states,
and show that the pairwise quantum correlations in terms of global quantum-$q$ discord can be monogamous
in multi-party quantum systems.

This paper is organized as follows. In Section~\ref{Sec: QD}, we briefly recall the motivation and definitions of QD and GQD.
In Section~\ref{Sec:q-GQD}, we introduce the concept of $q$-GQD in multi-party quantum systems
as well as its properties.
In Section~\ref{Sec:analytic}, we provide an analytical expression of $q$-GQD for some classes of multi-qubit states, and
we show a sufficient condition for monogamy inequality of $q$-GQD in multi-party quantum systems in Section~\ref{Sec:mono}.
In Section~\ref{Sec:Con}, we summarize our results.

\section{Quantum Discord}\label{Sec: QD}
\subsection{Quantum Discord in Bipartite and Multipartite Quantum Systems}
\label{subsec:QD}
For a bipartite quantum state $\rho^{AB}$
and a set of local von Neumann measurement $\{\Pi_{j}^{B}\}$ on subsystem $B$,
the quantum state of subsystem $A$ after performing the measurement  $\{\Pi_{j}^{B}\}$ and obtaining
the measurement outcome $j$ is $\rho^{A}_{j}= \T_{B} \left[\left(I_{A}\otimes\Pi_{j}^{B}\right)\rho^{AB}\right]/p_j$,
with probability $p_{j}=\T\left[\left(I_{A}\otimes\Pi_{j}^{B}\right)\rho^{AB}\right]$.
The average of the von Neumann entropies $S(\rho^{A}_{j})$ of $\rho^{A}_{j}$ weighted by probabilities $p_{j}$ yields to
the conditional entropy of subsystem $A$ given the complete measurement $\{\Pi_{j}^{B}\}$ on subsystem $B$,
\begin{align}
S^{\{\Pi_{j}^{B}\}}\left(\rho^A\right)=\sum _{j} p_j S\left(\rho^{A}_{j}\right),
\label{condent}
\end{align}
where $S(\rho)=-\T\rho\log\rho$ is the von Neumann entropy of the quantum state $\rho$.

We note that the possible inequivalent concepts of quantum mutual information we can consider here are
\begin{align}
{\mathcal I}(\rho^{AB})= S\left(\rho^A\right)+S\left(\rho^B\right)-S\left(\rho^{AB}\right)
\label{qmut1}
\end{align}
and
\begin{align}
{\mathcal J}^{\{\Pi_{j}^{B}\}}\left(\rho^{AB}\right) = S\left(\rho^A\right)-S_{\{\Pi_{j}^{B}\}}\left(\rho^A\right)
\label{qmut2}
\end{align}
where $\rho^A=\T_B\rho^{AB}$, $\rho^B=\T_A\rho^{AB}$ are the reduced density matrices of $\rho_{AB}$ on
subsystems $A$ and $B$, respectively. In other words, Eq.~(\ref{qmut1}) is a straightforward generalization of
the classical mutual information in terms of
the quantum conditional entropy $S(\rho^{A|B})=S(\rho^{AB})-S(\rho^{B})$,
whereas Eq.~(\ref{qmut2}) can be considered as the measurement-induced quantum mutual information
using Eq.~(\ref{condent}).

The QD of $\rho^{AB}$ is then defined by the minimized
difference between these two inequivalent generalizations of classical mutual information over all possible
von Neumann measurements $\{\Pi_{j}^{B}\}$ on subsystem $B$~\cite{OZ},
\begin{align}
{\delta}^{\leftarrow}\left(\rho^{AB}\right)=\min_{\{\Pi_{i}^{B}\}}\left[{\mathcal I}\left(\rho^{AB}\right)
-{\mathcal J}^{\{\Pi _{i}^{B}\}}\left(\rho^{AB}\right)\right].
\label{QD}
\end{align}
QD is not symmetric under the interchange of subsystems to be measured,
\begin{align}
{\delta}^{\leftarrow}\left(\rho^{AB}\right)\neq {\delta}^{\leftarrow}\left(\rho^{BA}\right),
\label{notsym}
\end{align}
and it is nonnegative for any bipartite quantum state $\rho^{AB}$~\cite{OZ}.

QD has been generalized into multi-party quantum systems~\cite{OW};
for an $n$-party quantum states $\rho^{A_1\cdots A_n}=\rho^{\bf A}$ with the reduced density matrix $\rho^{A_k}$
of the subsystem $A_k$ for each $k=1,\cdots, n$, its quantum mutual information is given by
\begin{equation}
{\mathcal I}\left(\rho^{{\bf A}}\right)=\sum_{i=1}^{n} S\left(\rho^{A_{i }}\right)-S\left(\rho^{{\bf A}}\right).
\label{nqmu}
\end{equation}
(Throughout this paper, the bold superscript ${\bf A}$ in $\rho^{\bf A}$ denotes an $n$-party quantum system unless specified.)
By using the notation of quantum conditional entropy
\begin{align}
S\left(\rho^{\overline{A}_k|A_{k}}\right)=S\left(\rho^{{\bf A}}\right)-S\left(\rho^{A_{k}}\right),
\label{qcon}
\end{align}
where the superscript $\overline{A}_k$ stands for the quantum systems complement to $A_k$,
we can rewrite the quantum mutual information in Eq.~(\ref{nqmu}) as
\begin{equation}
{\mathcal I}\left(\rho^{{\bf A}}\right)=\sum_{i \ne k} S\left(\rho^{A_{i}}\right)-S\left(\rho^{{\overline{A}_k|A_{k}}}\right).
\label{nqmu2}
\end{equation}

Now let us consider the situation that a von Neumann measurement $\{\Pi _{j}^{A_{k}} \}$ is performed on subsystem $A_k$.
The post-measurement joint state of the systems $\overline{A}_k$ is given by
\begin{equation}
 \rho^{\overline{A}_k}_j=\T _{A_{k}} \left[\left(P_{j}^{A_{k}}\right) \rho^{\bf A}\right]/p_{j}^{A_{k}}
\end{equation}
where $P_{j}^{A_{k}} = (I \otimes \cdots \otimes \Pi _{j}^{A_{k}} \otimes \cdots \otimes I)$ is the measurement operator acting only on
the subsystem $A_k$ and $p_{j}^{A_{k}}=\T[P_{j}^{A_{k}} \rho^{{\bf A}}]$
is the probability of the measurement result on subsystem $A_k$.
Similar to Eq.~(\ref{condent}), the average of the von Neumann entropies $S(\rho^{\overline{A}_k}_j)$ weighted
by the probabilities $p_{j}^{A_{k}}$
leads us to the quantum conditional entropy
given the complete measurement $\{\Pi _{j}^{A_{k}} \}$ on the system $A_{k}$
\begin{equation}
S^{\{\Pi _{j}^{A_{k}} \}}\left(\rho^{\overline{A}_k}\right)=\sum_{j} p_{j}^{A_{k}}S\left(\rho^{\overline{A}_k}_j\right).
\label{ncondent}
\end{equation}
Thus the quantum mutual information, induced by the von Neumann measurement $\{\Pi _{j}^{A_{k}} \}$ is defined by
\begin{equation}
{\mathcal J}^{\{\Pi _{j}^{A_{k}}\}}\left(\rho^{{\bf A}}\right)=\sum_{i \ne k}S\left(\rho^{A_{i}}\right)-S^{\{\Pi _{j}^{A_{k}} \}}\left(\rho^{\overline{A}_k}\right),
\label{nqmu3}
\end{equation}
which is an analogous quantity of Eq.~(\ref{qmut2}).
The QD for the $n$-party state $\rho^{{\bf A}}$ is defined as
the minimized difference between Eq~(\ref{nqmu2}) and Eq.~(\ref{nqmu3}) over all possible
von Neumann measurement $\{\Pi _{j}^{A_{k}} \}$ on subsystem $A_k$,
\begin{eqnarray}
{\delta}^{A_{k}}(\rho^{{\bf A}})=\min_{\{\Pi _{j}^{A_{k}}\}}\left[{\mathcal I}\left(\rho^{{\bf A}}\right)-
{\mathcal  J}^{\{\Pi _{j}^{A_{k}} \}}\left(\rho^{{\bf A}}\right)\right].
\end{eqnarray}

\subsection{Global Quantum Discord}
\label{subsec:GQD}
For a bipartite quantum state $\rho^{AB}$,
its mutual information
${\mathcal I}(\rho^{AB})$ in Eq.~(\ref{qmut1}) can be expressed in terms of the relative entropy between $\rho^{AB}$ and
$\rho^{A}\otimes \rho^{B}$,
\begin{equation}
{\mathcal I}\left(\rho^{AB}\right)=S\left(\rho^{AB}||\rho^{A}\otimes \rho^{B}\right),
\label{qmurel}
\end{equation}
where $S\left(\rho||\sigma\right)=\T\rho\log \rho-\T\rho\log\sigma$ is the quantum relative entropy of $\rho$ and $\sigma$.

In order to express the measurement-induced quantum mutual information in Eq.~(\ref{qmut2}) in terms of quantum relative entropy,
we consider a non-selective von-Neumann measurement $\Phi=\{\Pi_{j}^{B}=\ket{b_j}\bra{b_j}\}$ on subsystem $B$,
which yields to the quantum state
\begin{align}
\Phi\left(\rho^{AB}\right)=&\sum_{j}\left(I \otimes \Pi _{j}^{B}\right)\rho^{AB}\left(I \otimes \Pi _{j}^{B}\right)\nonumber\\
=&\sum_{j} p_{j} \rho^{A}_j \otimes \ket{b_{j}}\bra{b_{j}},
\label{Phirhoab}
\end{align}
and its reduced density matrix of subsystem $B$
\begin{equation}
\Phi\left(\rho^{B}\right)=\T_{A}\left(\Phi\left(\rho^{AB}\right)\right)=\sum_{j} p_{j} \ket{b_{j}} \bra{b_{j}}.
\label{Phirhob}
\end{equation}

Because
$\{\ket{b_j}\}$ forms an orthonormal basis for subsystem $B$,
we have
\begin{equation}
S\left(\Phi\left(\rho^{AB}\right)\right)=H(p)+\sum_{j} p_{j} S(\rho^{A}_j)
\end{equation}
and
\begin{equation}
S\left(\Phi\left(\rho^{B}\right)\right)=H(P),
\end{equation}
where $H(P)$ is the Shannon entropy of the probability ensemble $P=\{p_i\}$.
Thus we can rewrite the measurement-induced quantum mutual information in Eq.~(\ref{qmut2}) as
\begin{align}
{\mathcal J}^{\{\Pi_{j}^{B}\}}\left(\rho^{AB}\right)&=S\left(\rho^{A}\right)-\sum_{j} p_{j} S\left(\rho^{A}_j\right)\nonumber\\
=&S\left(\rho^{A}\right)+S\left(\Phi\left(\rho^{B}\right)\right)-S\left(\Phi\left(\rho^{AB}\right)\right)\nonumber\\
=&S\left(\Phi(\rho^{AB})||\rho^{A}\otimes\Phi\left(\rho^{B}\right)\right)\nonumber\\
=&{\mathcal I}\left(\Phi\left(\rho^{AB}\right)\right).
\label{qmucon}
\end{align}
From Eqs.~(\ref{qmurel}) and (\ref{qmucon}), the definition of QD in Eq.~(\ref{QD}) can be expressed in terms of the relative entropy as
\begin{align}
{\delta}^{\leftarrow}\left(\rho^{AB}\right)=&\min_{\{\Pi_{j}^{B}\}}
[S\left(\rho^{AB}||\rho^{A}\otimes \rho^{B}\right)\nonumber\\
&~~~~~~~~~-S\left(\Phi\left(\rho^{AB}\right)||\rho^{A}\otimes\Phi\left(\rho^{B}\right)\right)]\nonumber\\
=& \min_{\{\Pi_{j}^{B}\}}
\left[{\mathcal I}\left(\rho^{AB}\right)-{\mathcal I}\left(\Phi\left(\rho^{AB}\right)\right)\right],
\label{relQD}
\end{align}
where the minimum is taken over all possible measurement $\{\Pi_{j}^{B}\}$ on subsystem $B$.

The GQD of a bipartite state $\rho^{AB}$
was defined by considering von Neumann measurements $\{\Pi_{j_1}^{A} \otimes \Pi_{j_2}^{B} \}$ on both subsystems $A$ and $B$,
\begin{align}
{\mathcal D}\left(\rho^{AB}\right)
=&\min_{\{\Pi _{j_1}^{A}\otimes \Pi _{j_2}^{B} \}}\left[{\mathcal I}\left(\rho^{AB}\right)-{\mathcal I}\left(\Phi\left(\rho^{AB}\right)\right)\right],
\label{GQD}
\end{align}
where
\begin{equation}
\Phi\left(\rho^{AB}\right)=\sum_{j_1,j_2}\left(\Pi _{j_1}^{A} \otimes \Pi _{j_2}^{B}\right)\rho^{AB}\left(\Pi _{j_1}^{A} \otimes \Pi _{j_2}^{B}\right),
\end{equation}
and the minimization is over all possible von Neumann measurements $\{\Pi_{j_1}^{A}\}$ and $\{\Pi_{j_2}^{B}\}$ on
subsystems $A$ and $B$, respectively.

Unlike QD, it is clear from the definition that GQD is symmetric under the permutation of subsystems. GQD was also shown to be
nonnegative for an arbitrary quantum state.
Moreover GQD has an useful operational interpretation; in the absence of GQD, the quantum state simply describes
a classical probability distribution.

The definition of GQD in Eq.~(\ref{GQD}) was also generalized to multi-party quantum systems;
for an $n$-party quantum state $\rho^{{\bf A}}$, its GQD is defined as
\begin{align}
{\mathcal D}\left(\rho^{\bf A}\right)
=& \min_{\{\Pi _{j}\}}[{\mathcal  I}(\rho^{{\bf A}})- {\mathcal  I}\left(\Phi\left(\rho^{{\bf A}}\right) \right)],
\label{nGQD}
\end{align}
where $\Phi(\rho^{{\bf A}}) = \sum_{j} \Pi_{j}  \rho^{{\bf A}} \Pi_{j} $ is the density operator after a non-selective local measurement
$\Phi=\{\Pi_{j}= \Pi_{j_{1}}^{A_{1}} \otimes\Pi_{j_{2}}^{A_{2}}\otimes\cdots\otimes\Pi_{j_{n}}^{A_{n}}\}$ with $j$ denoting the index string $(j_{1},~\cdots ,j_{n})$,
\begin{align}
{\mathcal  I}(\Phi(\rho^{{\bf A}}))=\sum_{k=1}^{n} S\left(\Phi\left(\rho^{A_{k }}\right)\right)-S\left(\Phi\left(\rho^{{\bf A}}\right)\right)
\label{npimu}
\end{align}
and
\begin{align}
\Phi\left(\rho^{A_{k}}\right)=&\T_{\overline{A}_k}\Phi\left(\rho^{{\bf A}}\right)
\label{npii}
\end{align}
is the reduced density operator of $\Phi(\rho^{{\bf A}})$ onto the subsystem $A_k$ for each $k=1,\cdots, n$.

\section{Global Quantum Discord in terms of Tsallis-$q$ Entropy}
\label{Sec:q-GQD}
In this section, we first recall the definition of Tsallis-$q$ entropy of quantum states~\cite{tsallis, lv} and
bipartite QD concerned with Tsallis-$q$ entropy~\cite{J}. We then introduce a one-parameter class
of GQD in terms of Tsallis-$q$ entropy and investigate its properties.

\subsection{Tsallis-$q$ Entropy and Quantum Discord}

Using the generalized logarithmic function with respect to the parameter $q$ (namely $q$-logarithm)
\begin{eqnarray}
\ln _{q} x &=&  \frac {x^{1-q}-1} {1-q},
\label{qlog}
\end{eqnarray}
quantum Tsallis-$q$ entropy of a quantum state $\rho$ is defined as
\begin{align}
S_{q}\left(\rho\right)=-\T \rho ^{q} \ln_{q} \rho = \frac {1-\T\left(\rho ^q\right)}{q-1}
\label{Tqent}
\end{align}
for $q > 0,~q \ne 1$~\cite{lv}.
Although the quantum Tsallis-$q$ entropy has a singularity at $q=1$,
it is straightforward to check that it converges to von Neumann entropy when $q$ tends to $1$,
\begin{equation}
\lim_{q\rightarrow 1}S_{q}\left(\rho\right)=S\left(\rho\right).
\end{equation}

Tsallis entropy has been widely used in many areas of quantum
information theory such as the conditions for separability of quantum
states~\cite{ar,tlb,rc} and the characterization of classical statistical correlations
inherented in quantum states~\cite{rr}. There are also
discussions about using the non-extensive statistical
mechanics to describe quantum entanglement~\cite{bpcp}.
Similar to other entropy functions, Tsallis entropy is nonnegative for any quantum state.

Using Tsallis-$q$ entropy, QD was generalized to a one-parameter class of quantum correlation measure, namely
$q$-quantum discord ($q$-QD) ~\cite{MPP,J}. The inequivalent expressions of quantum mutual information
in Eqs.~(\ref{qmut1}) and (\ref{qmut2}) for a bipartite quantum state $\rho^{AB}$
can be generalized in terms of Tsallis-$q$ entropy as
\begin{align}
{\mathcal I}_q\left(\rho^{AB}\right)= S_{q}\left(\rho^A\right)+S_{q}\left(\rho^B\right)-S_{q}\left(\rho^{AB}\right)
\label{Tqmut1}
\end{align}
and
\begin{align}
{\mathcal J}_{q}^{\{\Pi_{j}^{B}\}}\left(\rho^{AB}\right) = S_{q}\left(A\right)-S_{q}^{\{\Pi_{j}^{B}\}}\left(\rho^A\right),
\label{Tqmut2}
\end{align}
where
\begin{align}
S_{q}^{\{\Pi_{j}^{B}\}}\left(\rho^A\right)=\sum _{j} p^q_j S_q\left(\rho^{A}_{j}\right)
\label{qcondent}
\end{align}
is the $q$-expected value of Tsallis-$q$ entropies $S_q\left(\rho^{A}_{j}\right)$~\cite{J}.
Then the $q$-QD for a bipartite state $\rho^{AB}$ is defined as
\begin{align}
\delta_{q}^{\leftarrow}\left(\rho^{AB}\right)=\min_{ \{ \Pi _{j}^{B}\}}\left[{\mathcal I}_{q}(\rho^{AB})-{\mathcal J}_{q}^{\{\Pi_{j}^{B}\}}\left(\rho^{AB}\right)\right],
\label{qQD}
\end{align}
where the minimization is taken over all possible sets of rank-one measurement $\{\Pi_{j}^{B}\}$ on subsystem $B$.

Due to the continuity of Tsallis-$q$ entropy with respect to the parameter $q$, $q$-QD converges to QD when $q$ tends to 1, therefore
the nonnegativity of $q$-QD follows from that of QD as $q$ tends to 1. Furthermore, the following proposition provides
a possible range of $q$ where $q$-QD is nonnegative~\cite{J, MPP}.
\begin{Prop}
For any bipartite state $\rho^{AB}$ and $0 < q \leq1$,
\begin{align}
\delta_{q}^{\leftarrow}\left(\rho^{AB}\right)\geq 0.
\label{q-QDposi}
\end{align}
\label{Prop:q-QDposi}
\end{Prop}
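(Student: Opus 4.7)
Unwinding (\ref{qQD}) with (\ref{Tqmut1})--(\ref{Tqmut2}) and cancelling the common $S_q(\rho^A)$, nonnegativity of $\delta_{q}^{\leftarrow}(\rho^{AB})$ is equivalent to showing that, for every rank-one projective measurement $\{\Pi_{j}^{B}\}$ on $B$,
\begin{equation*}
S_q(\rho^{AB}) \;\le\; S_q(\rho^B) + \sum_{j} p_{j}^{q}\, S_q(\rho^{A}_{j}).
\end{equation*}
The first step is to recognize the right-hand side as the Tsallis-$q$ entropy of the post-measurement state. Because the state $\Phi(\rho^{AB})=\sum_{j} p_{j}\,\rho^{A}_{j}\otimes \ket{b_j}\bra{b_j}$ in (\ref{Phirhoab}) is block-diagonal, one has $\T[(\Phi(\rho^{AB}))^q]=\sum_{j} p_{j}^{q}\,\T[(\rho^{A}_{j})^q]$, and (\ref{Tqent}) immediately yields
\begin{equation*}
S_q(\Phi(\rho^{AB})) \;=\; S_q(\Phi(\rho^B)) + \sum_{j} p_{j}^{q}\, S_q(\rho^{A}_{j}).
\end{equation*}
Together with $\T_B\Phi(\rho^{AB})=\rho^A$, this turns the needed inequality into $\mathcal{I}_q(\rho^{AB})\ge\mathcal{I}_q(\Phi(\rho^{AB}))$, i.e., the local non-selective measurement $\Phi$ must not increase the Tsallis-$q$ mutual information. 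This is the exact $q$-analogue of the identity used in (\ref{relQD}) for ordinary QD.

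The second step is to establish this monotonicity for $0<q\le 1$. Using (\ref{Tqent}) and $1-q\ge 0$, it is equivalent to the trace inequality
\begin{equation*}
\T[(\Phi(\rho^{AB}))^q]-\T[(\rho^{AB})^q] \;\ge\; \T[(\Phi(\rho^B))^q]-\T[(\rho^B)^q].
\end{equation*}
Each side is individually nonnegative: the projective measurement $\Phi_B(X)=\sum_{j}\Pi^{B}_{j}X\Pi^{B}_{j}$ is a unital CPTP map, so its output is majorized by its input (and likewise for $\mathrm{id}_A\otimes\Phi_B$), and $x\mapsto x^q$ is Schur-concave on spectra for $0<q\le 1$. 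The content of the inequality is therefore a quantitative comparison: the gain in $\T(\cdot^q)$ caused by measuring $B$ is at least as large inside the joint system $AB$ as on the marginal $B$ alone.

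I expect this quantitative comparison to be the principal obstacle. My plan is to exploit the operator concavity of $x\mapsto x^q$ on positive operators for $0<q\le 1$ (L\"owner--Heinz) together with Lieb's joint concavity of $(\rho,\sigma)\mapsto\T(\rho^q\sigma^{1-q})$, which together imply contractivity of the Tsallis relative entropy $D_q(\rho\|\sigma)=(1-\T[\rho^q\sigma^{1-q}])/(1-q)$ under every CPTP map. Specializing this contractivity to $\rho=\rho^{AB}$, $\sigma=\rho^A\otimes\rho^B$, and the local channel $\mathrm{id}_A\otimes\Phi_B$, and exploiting the block-diagonality of $\Phi(\rho^{AB})$ on the image side to eliminate cross terms, should deliver the required trace inequality. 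The subtle point is that, unlike in the von Neumann case, $\mathcal{I}_q$ is \emph{not} itself a Tsallis relative entropy of $\rho^{AB}$ against $\rho^A\otimes\rho^B$, so translating $D_q$-contractivity into $\mathcal{I}_q$-monotonicity requires careful bookkeeping of the extra $\T[\rho^q\sigma^{1-q}]$ terms; alternatively, one can bypass the relative-entropy route and argue directly by comparing the majorization orderings between the spectra of $\rho^{AB}$, $\Phi(\rho^{AB})$, $\rho^B$, and $\Phi_B(\rho^B)$.
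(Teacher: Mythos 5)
Your reduction is correct and, in its first half, reproduces the paper's own Lemma~\ref{Def:q-QD2}: the identity $S_q\left(\Phi\left(\rho^{AB}\right)\right)=S_q\left(\Phi\left(\rho^{B}\right)\right)+\sum_j p_j^q S_q\left(\rho^A_j\right)$ is exactly Eq.~(\ref{qcondent2}), and it converts the nonnegativity of $\delta_q^{\leftarrow}$ into the statement that ${\mathcal I}_q\left(\rho^{AB}\right)\geq {\mathcal I}_q\left(\Phi\left(\rho^{AB}\right)\right)$ for every local von Neumann measurement on $B$, equivalently the trace inequality you display. Up to that point everything checks out.

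The gap is that you never prove that inequality; you only announce two candidate strategies and you yourself flag the obstruction in each. Contractivity of the Tsallis relative entropy $D_q(\rho\|\sigma)$ under CPTP maps controls the single quantity $\T\left[\rho^q\sigma^{1-q}\right]$, but since ${\mathcal I}_q\left(\rho^{AB}\right)\neq D_q\left(\rho^{AB}\|\rho^A\otimes\rho^B\right)$ (the identity (\ref{qmurel}) is special to $q=1$), data processing for $D_q$ yields an inequality between $\sum_j p_j\,\T\left[(\rho^A_j)^q(\rho^A)^{1-q}\right]$ and $\T\left[(\rho^{AB})^q(\rho^A\otimes\rho^B)^{1-q}\right]$, which does not decompose into the four traces $\T\left[(\rho^{AB})^q\right]$, $\T\left[(\Phi(\rho^{AB}))^q\right]$, $\T\left[(\rho^{B})^q\right]$, $\T\left[(\Phi(\rho^{B}))^q\right]$ that you actually need to compare; the ``careful bookkeeping'' you defer is precisely the missing proof. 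Likewise, the majorization relations $\Phi\left(\rho^{AB}\right)\prec\rho^{AB}$ and $\Phi\left(\rho^{B}\right)\prec\rho^{B}$ only show that each side of your trace inequality is separately nonnegative; they give no handle on which side is larger, which is the entire content of the claim. So the proposal is an incomplete proof: the statement is correctly reduced to a nontrivial inequality that is then left open. Note that the paper itself does not prove Proposition~\ref{Prop:q-QDposi} either --- it imports it from Refs.~\cite{J,MPP} --- so to close the argument you would need to carry out one of your two programs in full or invoke the corresponding result from those references.
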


\subsection{$q$-Global Quantum Discord}

For a non-selective von-Neumann measurement $\Phi=\{\Pi_{j}^{B}=\ket{b_j}_B\bra{b_j}\}$ on subsystem $B$ of $\rho^{AB}$,
which yields a bipartite quantum state $\Phi\left(\rho^{AB}\right)$ in
Eq.~(\ref{Phirhoab}) and the reduced density matrix $\Phi\left(\rho^{B}\right)$ in Eq.~(\ref{Phirhob}), we have
\begin{align}
S_{q}\left(\Phi\left(\rho^{AB}\right)\right)=& \frac{1-\T\left(\sum_{j} p_{j} \rho_j^{A}\otimes\ket{b_{j}}_B\bra{b_{j}}\right)^{q}}{q-1}\nonumber\\
  =&\frac {1-\T\sum_{j}p_{j}^{q} \left(\rho_j^{A}\right)^{q}}{q-1}\nonumber\\
  =&\frac {1-\sum_{j}p_{j}^{q}}{q-1} +\sum_{j}p_{j}^{q}\frac{1-\T\left(\rho_j^{A}\right)^{q}}{q-1}\nonumber\\
  =&S_{q}\left(\Phi\left(\rho^{B}\right)\right)+\sum_{j}p_{j}^{q}S_{q}\left(\rho_j^{A}\right),
\label{qcondent2}
\end{align}
where the last equality is from the definition of Tsallis-$q$ entropy for
quantum states $\Phi\left(\rho^{B}\right)$ and $\rho_j^{A}$, respectively.

From Eq.~(\ref{Tqmut2}) together with Eqs.~(\ref{qcondent}) and (\ref{qcondent2}), we have
\begin{align}
{\mathcal J}_{q}^{\{\Pi_{j}^{B}\}}\left(\rho^{AB}\right)
=&S_{q}\left(\rho^{A}\right)+S_{q}\left(\Phi\left(\rho^{B}\right)\right)-S_{q}\left(\Phi\left(\rho^{AB}\right)\right)\nonumber \\
=&{\mathcal I}_{q}\left(\Phi\left(\rho^{AB}\right)\right),
\label{TrelQD}
\end{align}
which leads us to the following lemma.
\begin{Lem}
For a bipartite state $\rho^{AB}$, its $q$-quantum discord can be expressed as
\begin{align}
\delta_{q}^{\leftarrow}\left(\rho^{AB}\right)=\min_{\Phi}\left[{\mathcal I}_{q}\left(\rho^{AB}\right)- {\mathcal I}_{q}\left(\Phi\left(\rho^{AB}\right)\right)\right],
\label{q-QD2}
\end{align}
where the minimization is taken over all possible von Neumann measurement $\Phi$ of subsystem $B$.
\label{Def:q-QD2}
\end{Lem}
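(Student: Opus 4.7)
The plan is to observe that the lemma is essentially an algebraic rewriting of the definition of $q$-QD given in Eq.~(\ref{qQD}), combined with the identification of the measurement-induced mutual information $\mathcal{J}_{q}^{\{\Pi_{j}^{B}\}}(\rho^{AB})$ with $\mathcal{I}_{q}(\Phi(\rho^{AB}))$. In fact, the body of the excerpt has already assembled every ingredient I need: the decomposition of $S_q(\Phi(\rho^{AB}))$ in Eq.~(\ref{qcondent2}) and the resulting identity in Eq.~(\ref{TrelQD}) together state exactly that $\mathcal{J}_{q}^{\{\Pi_{j}^{B}\}}(\rho^{AB}) = \mathcal{I}_q(\Phi(\rho^{AB}))$.

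First, I would start from Eq.~(\ref{qQD}), so that
\begin{align*}
\delta_{q}^{\leftarrow}(\rho^{AB}) = \min_{\{\Pi_{j}^{B}\}}\left[\mathcal{I}_{q}(\rho^{AB}) - \mathcal{J}_{q}^{\{\Pi_{j}^{B}\}}(\rho^{AB})\right].
\end{align*}
Next, I would invoke Eq.~(\ref{TrelQD}) to replace $\mathcal{J}_{q}^{\{\Pi_{j}^{B}\}}(\rho^{AB})$ by $\mathcal{I}_q(\Phi(\rho^{AB}))$, noting that the set of rank-one von Neumann measurements $\{\Pi_{j}^{B}\}$ is in one-to-one correspondence with the set of non-selective measurement maps $\Phi(\rho^{AB}) = \sum_j (I\otimes \Pi_j^B)\rho^{AB}(I\otimes \Pi_j^B)$, so that minimization over one class is the same as minimization over the other. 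This yields the claimed identity Eq.~(\ref{q-QD2}).

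The only step that requires any real verification is the identification $\mathcal{J}_{q}^{\{\Pi_{j}^{B}\}} = \mathcal{I}_q(\Phi(\rho^{AB}))$, which is essentially Eq.~(\ref{TrelQD}). If I were being careful, I would re-derive this by expanding $S_q(\Phi(\rho^{AB}))$ using the block-diagonal structure in Eq.~(\ref{Phirhoab}): since the states $\rho_j^A \otimes \ket{b_j}\bra{b_j}$ live on mutually orthogonal subspaces of $\mathcal{H}_A\otimes\mathcal{H}_B$, one has $\T[(\Phi(\rho^{AB}))^q] = \sum_j p_j^q\, \T[(\rho_j^A)^q]$, and then the definition of Tsallis entropy directly produces
\begin{align*}
S_q(\Phi(\rho^{AB})) = S_q(\Phi(\rho^B)) + \sum_j p_j^q S_q(\rho_j^A),
\end{align*}
which, when combined with Eq.~(\ref{Tqmut2}) and Eq.~(\ref{qcondent}), gives $\mathcal{J}_{q}^{\{\Pi_{j}^{B}\}}(\rho^{AB}) = S_q(\rho^A) + S_q(\Phi(\rho^B)) - S_q(\Phi(\rho^{AB})) = \mathcal{I}_q(\Phi(\rho^{AB}))$.

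There is no real obstacle here: the block-orthogonality of the post-measurement state on $B$ is what makes the Tsallis entropy split additively into a classical part $S_q(\Phi(\rho^B))$ and a $q$-weighted conditional part $\sum_j p_j^q S_q(\rho_j^A)$, which is precisely the $q$-conditional entropy appearing in $\mathcal{J}_q^{\{\Pi_j^B\}}$. Once that is noted, the lemma follows by direct substitution into the definition of $\delta_q^{\leftarrow}$.
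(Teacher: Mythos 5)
Your proposal is correct and follows essentially the same route as the paper: the paper derives the block-diagonal splitting $S_q(\Phi(\rho^{AB}))=S_q(\Phi(\rho^{B}))+\sum_j p_j^q S_q(\rho_j^{A})$ in Eq.~(\ref{qcondent2}), deduces ${\mathcal J}_{q}^{\{\Pi_{j}^{B}\}}(\rho^{AB})={\mathcal I}_{q}(\Phi(\rho^{AB}))$ in Eq.~(\ref{TrelQD}), and then obtains the lemma by substituting into the definition~(\ref{qQD}), exactly as you do. Your explicit remark that the minimizations over measurements and over non-selective maps coincide is a small point the paper leaves implicit, but nothing more needs to be said.
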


Now we introduce the concept of $q$-global quantum discord ($q$-GQD) as a symmetric generalization of $q$-QD in Lemma~\ref{Def:q-QD2}.
\begin{Def}
For a bipartite state $\rho^{AB}$, its $q$-GQD is defined as
\begin{align}
{\mathcal D}_q\left(\rho^{AB}\right)
=&\min_{\Phi}\left[{\mathcal I}_q\left(\rho^{AB}\right)- {\mathcal I}_q\left(\Phi\left(\rho^{AB}\right)\right)\right],
\label{q-GQD}
\end{align}
where the minimization is over all possible local von Neumann measurements $\Phi=\{\Pi_{j_1}^{A} \otimes \Pi_{j_2}^{B} \}$ on both subsystems $A$ and $B$.
\label{Def:q-GQD}
\end{Def}

We also propose a systematic extension of $q$-GQD in Definition~\ref{Def:q-GQD} into multi-party quantum systems;
as a generalization of quantum mutual information of a $n$-party quantum state $\rho^{A_{1}\cdots A_{n}}$($=\rho^{{\bf A}}$) in Eq.~(\ref{nqmu}),
we define the quantum mutual information in terms of Tsallis-$q$ entropy as
\begin{equation}
{\mathcal I}_q\left(\rho^{{\bf A}}\right) = \sum_{i=1}^{n} S_q\left(\rho^{A_{i }}\right)-S_q\left(\rho^{{\bf A}}\right).
\label{q-nqmu}
\end{equation}

Let us consider a set of local measurements on each subsystem
$\Phi=\{\Pi_{j}= \Pi_{j_{1}}^{A_{1}} \otimes\Pi_{j_{2}}^{A_{2}}\otimes\cdots\otimes\Pi_{j_{n}}^{A_{n}}\}$
and the density operator $\Phi(\rho^{{\bf A}}) = \sum_{j} \Pi_{j}\rho^{{\bf A}}\Pi_{j} $
obtained after the non-selective measurement $\Phi$. The quantum mutual information of $\Phi(\rho^{{\bf A}})$ is then defined as
\begin{align}
{\mathcal  I}_q\left(\Phi\left(\rho^{{\bf A}}\right)\right)=\sum_{i=1}^{n} S_q\left(\Phi\left(\rho^{A_{i }}\right)\right)-S_q\left(\Phi\left(\rho^{{\bf A}}\right)\right)
\label{q-npimu}
\end{align}
with the reduced density matrix
\begin{align}
\Phi\left(\rho^{A_{i }}\right)=&\T_{\overline{A}_i}\Phi\left(\rho^{{\bf A}}\right)
\label{q-npii}
\end{align}
onto subsystem $A_i$ for each $i=1,\cdots, n$.

\begin{Def}
For an $n$-party quantum state $\rho^{A_{1}\cdots A_{n}}$($=\rho^{{\bf A}}$), its $q$-GQD is defined as
\begin{align}
{\mathcal D}_q\left(\rho^{\bf A}\right)
=& \min_{\Phi}\left[{\mathcal  I}_q\left(\rho^{{\bf A}}\right)-{\mathcal  I}_q\left(\Phi\left(\rho^{{\bf A}}\right)\right)\right],
\label{q-nGQD}
\end{align}
where the minimization is over all possible local von Neumann measurements
$\Phi=\{\Pi_{j}= \Pi_{j_{1}}^{A_{1}} \otimes\Pi_{j_{2}}^{A_{2}}\otimes\cdots\otimes\Pi_{j_{n}}^{A_{n}}\}$.
\label{Def:q-nGQD}
\end{Def}
Similar to GQD, $q$-GQD is symmetric under the permutation of subsystems. Moreover,
due to the minimization character over all possible local von Neumann measurements,
it is also clear that $q$-GQD is invariant under local unitary transformations, that is,
\begin{align}
{\mathcal D}_q\left(\rho^{\bf A}\right)=
{\mathcal D}_q\left(U\rho^{\bf A} U^{\dagger}\right),
\label{inva}
\end{align}
for any local unitary operator $U=U^{A_1}\otimes \cdots \otimes U^{A_n}$.

The following theorem shows that the nonnegativity of $q$-GQD is assured for a selective choice of the parameter $q$.
\begin{Thm}
For any $n$-party quantum state $\rho^{\bf A}$, its $q$-GQD is nonnegative for $0< q \leq1$,
\begin{align}
{\mathcal D}_{q}\left(\rho^{\bf A}\right)\geq 0.
\label{q-GQDposi}
\end{align}
\label{Thm:q-GQDposi}
\end{Thm}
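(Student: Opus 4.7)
The plan is to establish the stronger statement that
${\mathcal I}_q(\rho^{\bf A}) - {\mathcal I}_q(\Phi(\rho^{\bf A})) \ge 0$
for \emph{every} product von Neumann measurement
$\Phi=\{\Pi_{j_1}^{A_1}\otimes\cdots\otimes\Pi_{j_n}^{A_n}\}$;
taking the minimum over $\Phi$ then immediately yields
${\mathcal D}_q(\rho^{\bf A})\ge 0$. The strategy is to peel off the local dephasings one subsystem at a time and telescope, expressing the $n$-partite gap as a sum of bipartite $q$-QD-type gaps, each nonnegative by Proposition~\ref{Prop:q-QDposi}.

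Concretely, I would write $\Phi = \Phi_{A_n}\circ\cdots\circ\Phi_{A_1}$, where $\Phi_{A_k}(\sigma)=\sum_{j_k}\Pi_{j_k}^{A_k}\sigma\Pi_{j_k}^{A_k}$ acts only on site $A_k$, and set $\rho_0 = \rho^{\bf A}$, $\rho_k = \Phi_{A_k}(\rho_{k-1})$, so that $\rho_n=\Phi(\rho^{\bf A})$. Telescoping,
\[
{\mathcal I}_q(\rho^{\bf A})-{\mathcal I}_q(\Phi(\rho^{\bf A}))=\sum_{k=1}^n\left[{\mathcal I}_q(\rho_{k-1})-{\mathcal I}_q(\rho_k)\right].
\]
Because $\Phi_{A_k}$ is trivial on every other site, across the $k$-th step only the $A_k$-marginal changes---that is, $\rho_k^{A_i}=\rho_{k-1}^{A_i}$ for $i\ne k$ and $\rho_k^{A_k}=\Phi_{A_k}(\rho_{k-1}^{A_k})$---so substituting into the definition~(\ref{q-nqmu}) collapses the $k$-th summand to
\[
\bigl[S_q(\rho_{k-1}^{A_k})-S_q(\Phi_{A_k}(\rho_{k-1}^{A_k}))\bigr]-\bigl[S_q(\rho_{k-1})-S_q(\Phi_{A_k}(\rho_{k-1}))\bigr],
\]
which, after adding and subtracting $S_q(\rho_{k-1}^{\overline{A_k}})$, is exactly the \emph{bipartite} gap ${\mathcal I}_q(\rho_{k-1})-{\mathcal I}_q(\Phi_{A_k}(\rho_{k-1}))$ for the cut $A_k\,|\,\overline{A_k}$.

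By Lemma~\ref{Def:q-QD2} applied to $\rho_{k-1}$ viewed as a bipartite state with measurement on $A_k$, this gap is lower bounded by $\delta_q^{\leftarrow}(\rho_{k-1})$, and Proposition~\ref{Prop:q-QDposi} then gives $\delta_q^{\leftarrow}(\rho_{k-1})\ge 0$ for $0<q\le 1$. Hence every summand is nonnegative, the full pre-minimization gap is nonnegative for every $\Phi$, and ${\mathcal D}_q(\rho^{\bf A})\ge 0$ follows. The main obstacle I anticipate is the bookkeeping in the telescoping step: one must check that only the $A_k$ marginal changes at stage $k$, that the other $n-1$ marginal entropy terms cancel in pairs, and that the surviving expression genuinely matches a bipartite Tsallis $q$-mutual-information drop, so that Proposition~\ref{Prop:q-QDposi} applies verbatim.
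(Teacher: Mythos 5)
Your proposal is correct and follows essentially the same route as the paper's own proof: peel off the local dephasings one site at a time, telescope the $n$-party mutual-information gap into a sum of bipartite gaps across the cuts $A_k\,|\,\overline{A}_k$, bound each by the bipartite $q$-QD via Lemma~\ref{Def:q-QD2}, and invoke Proposition~\ref{Prop:q-QDposi} for $0<q\leq 1$. The only cosmetic difference is that you prove nonnegativity of the gap for every product measurement $\Phi$ rather than only for the optimal one, which is an immaterial strengthening.
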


\begin{proof}
For a $n$-party quantum state $\rho^{\bf A}$, and a set of local von Neumann measurement
$\Phi=\{\Pi_{j}= \Pi_{j_{1}}^{A_{1}} \otimes\Pi_{j_{2}}^{A_{2}}\otimes\cdots\otimes\Pi_{j_{n}}^{A_{n}}\}$
with the index string $j=(j_{1},~\cdots ,j_{n})$, let $\Phi^{A_k}=\{P_{j_k}^{A_k}=I\otimes \cdots \otimes \Pi _{j_k}^{A_{k}}\otimes \cdots \otimes I\}$ and
\begin{align}
\Phi^{A_k}\left(\rho^{\bf A}\right)=\sum_{j_k}P_{j_k}^{A_k}\rho^{\bf A}P_{j_k}^{A_k},
\label{phik}
\end{align}
for each $k=1,2,\cdots,n$.
In other words, $\Phi^{A_k}\left(\rho^{\bf A}\right)$ is the $n$-party quantum state after the non-selective
local measurement $\{\Pi _{j_k}^{A_{k}} \}$ only
on subsystem $A_k$ for each $k=1,2,\cdots,n$.

We note that for any two subsystems $A_{i}$ and $A_{k}$ such that $1\leq i\neq k \leq n$, we have
\begin{align}
\Phi^{A_i}\left(\Phi^{A_k}\left(\rho^{\bf A}\right)\right)=\Phi^{A_k}\left(\Phi^{A_i}\left(\rho^{\bf A}\right)\right).
\label{ijsame}
\end{align}
Using the notation $\Phi^{A_i}\left(\Phi^{A_k}\left(\rho^{\bf A}\right)\right)=\Phi^{A_i A_k}\left(\rho^{\bf A}\right)$,
we define $\sigma_k^{\bf A}$ as the $n$-party quantum state after the non-selective local measurements
$\{\Pi _{j_1}^{A_{1}}\},~\{\Pi _{j_2}^{A_{2}}\}, \cdots ,\{\Pi _{j_k}^{A_{k}}\}$ on the first $k$ subsystems $A_1, A_2,\cdots,A_k$
for each $k=0,1,2,\cdots,n$, that is,
\begin{align}
\sigma_k^{\bf A}=&\Phi^{A_1 A_2 \cdots A_k}\left(\rho^{\bf A}\right),~\sigma_0^{\bf A}=\rho^{\bf A}
\label{sigk}
\end{align}
and
\begin{align}
\sigma_n^{\bf A}=\Phi^{A_1 A_2 \cdots A_n}\left(\rho^{\bf A}\right)=\Phi\left(\rho^{\bf A}\right)
\label{sig0n}
\end{align}

By assuming that $\Phi=\{\Pi_{j}= \Pi_{j_{1}}^{A_{1}} \otimes\Pi_{j_{2}}^{A_{2}}\otimes\cdots\otimes\Pi_{j_{n}}^{A_{n}}\}$ is an
optimal measurement for ${\mathcal D}_{q}\left(\rho^{\bf A}\right)$, we have
\begin{align}
{\mathcal D}_{q}\left(\rho^{\bf A}\right)=&{\mathcal I}_{q}\left(\rho^{\bf A}\right)-{\mathcal I}_{q}\left(\Phi\left(\rho^{\bf A}\right)\right)\nonumber\\
=&{\mathcal I}_{q}\left(\sigma_0^{\bf A}\right)-{\mathcal I}_{q}\left(\sigma_n^{\bf A}\right)\nonumber\\
=&\sum_{k=0}^{n-1}\left[{\mathcal I}_{q}\left(\sigma_k^{\bf A}\right)-{\mathcal I}_{q}\left(\sigma_{k+1}^{\bf A}\right)\right]\nonumber\\
=&\sum_{k=0}^{n-1}\left[{\mathcal I}_{q}\left(\sigma_k^{\bf A}\right)-{\mathcal I}_{q}\left(\Phi^{A_{k+1}}\left(\sigma_{k}^{\bf A}\right)\right)\right].
\label{recu1}
\end{align}

Now, for each $k=0,\cdots,{n-1}$, let us consider $\sigma_k^{\bf A}=\sigma_k^{A_1 A_2\cdots A_n}$
as a bipartite quantum state $\sigma_k^{\overline{A}_{k+1} A_{k+1}}$ with respect to the bipartition between the subsystems $A_{k+1}$ and its complement,
$\overline{A}_{k+1}$. Then we have
\begin{align}
{\mathcal I}_{q}\left(\sigma_k^{\bf A}\right)&-{\mathcal I}_{q}\left(\Phi^{A_{k+1}}\left(\sigma_{k}^{\bf A}\right)\right)\nonumber\\
\geq&\min_{\Phi^{A_{k+1}}}\left[{\mathcal I}_{q}\left(\sigma_k^{\bf A}\right)-{\mathcal I}_{q}\left(\Phi^{A_{k+1}}\left(\sigma_{k}^{\bf A}\right)\right)\right]\nonumber\\
=&\delta_{q}^{\leftarrow}\left(\sigma_k^{\overline{A}_{k+1} A_{k+1}}\right)\nonumber\\
\geq&0
\label{recu2}
\end{align}
where $\delta_{q}^{\leftarrow}\left(\sigma_k^{\overline{A}_{k+1} A_{k+1}}\right)$ is the $q$-QD of the bipartite quantum state
$\sigma_k^{\overline{A}_{k+1} A_{k+1}}$, and the last inequality holds for $0< q \leq 1$ by Proposition~\ref{Prop:q-QDposi}.
From Eqs.~(\ref{recu1}) and (\ref{recu2}), we have
\begin{align}
{\mathcal D}_{q}\left(\rho^{\bf A}\right)\geq&\sum_{k=0}^{n-1}\delta_{q}^{\leftarrow}\left(\sigma_k^{\overline{A}_{k+1} A_{k+1}}\right)\geq0,
\label{eq:posi2}
\end{align}
which completes the proof.
\end{proof}

\section{Analytic Evaluation}
\label{Sec:analytic}

As a measure of quantum correlation among composite systems, $q$-GQD is a well-defined quantity for arbitrary quantum states.
However, it is hard to evaluate due to the minimization over all possible local von Neumann measurements in the definition.
In this section, by investigating the monotonicity of Tsallis-$q$ entropy under majorization of real vectors,
we provide an analytic way of evaluating $q$-GQD for some classes of multi-qubit quantum states.

Let us recall the definition of Tsallis-$q$ entropy for a probability distribution $P=\{p_j\}$~\cite{tsallis},
\begin{align}
H_q\left(P\right)=-\sum_j {p_j}^q \ln_q p_j=\frac{1-\sum_j {p_j}^q}{q-1}.
\label{ctsal}
\end{align}
For a quantum state $\rho$ with spectral decomposition $\rho=\sum_j \lambda_j\ket{\psi_j}\bra{\psi_j}$,
it is straightforward to check that the quantum Tsallis-$q$ entropy of $\rho$ in Eq.~({\ref{Tqent}}) is
in fact the Tsallis-$q$ entropy of the spectrum of $\rho$, that is,
\begin{align}
S_q\left(\rho\right)=H_q\left(\Lambda\right),
\end{align}
where $\Lambda=\{\lambda_j\}$.

Now we will consider a special property of Tsallis-$q$ entropy, namely {\em Schur concavity}, and
before this, we first introduce the concept of {\em majorization} among real vectors~\cite{MO}.
For real vectors $\vec{x},~ \vec{y} \in \R^n$ such that
$\vec{x}=\left(x_{1}, x_{2}, \cdots, x_{n}\right)$ and  $\vec{y}=\left(y_{1}, y_{2}, \cdots, y_{n}\right)$,
$\vec{x}$ is said to be majorized by $\vec{y}$, denoted by $\vec{x}\prec\vec{y}$ if
\begin{align}
\sum_{j=1}^{k}x_j \leq \sum_{j=1}^{k}y_j
\label{eq:major1}
\end{align}
for $j=1,2,\cdots,n-1$, and
\begin{align}
\sum_{j=1}^{n}x_j = \sum_{j=1}^{n}y_j.
\label{eq:major2}
\end{align}

A real-valued function $\phi$ defined on ${\mathcal A} \subset \R^n$ is said to be
{\em Schur-concave} on ${\mathcal A}$ if
\begin{align}
\vec{x}\prec \vec{y}~~on~~{\mathcal A}~\Rightarrow \phi\left(\vec{x}\right) \geq\phi\left(\vec{y}\right),
\label{eq:Schur}
\end{align}
for any $\vec{x},~ \vec{y} \in {\mathcal A}$.
We further note that $\phi$ is said to be {\em symmetric}
if
\begin{align}
\phi\left(\vec{x} \right)=\phi\left(M\vec{x}\right)
\label{sym}
\end{align}
for any $n$-dimensional permutation $M$, and $\phi$ is said to be {\em concave} if
\begin{align}
\phi\left(p\vec{x}+(1-p)\vec{y}\right)\geq p\phi\left(\vec{x}\right)+(1-p)\phi\left(\vec{y}\right),
\label{phiconcon}
\end{align}
for any $\vec{x},~ \vec{y} \in \R^n$ and $0\leq p \leq1$.

For a sufficient condition of Schur-concavity, we have the following proposition~\cite{MO};
\begin{Prop}
If a real-valued function $\phi$ defined on $\R^n$ is symmetric and concave,
then $\phi$ is Schur-concave.
\label{propchur}
\end{Prop}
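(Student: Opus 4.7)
The plan is to reduce Schur-concavity to a straightforward application of concavity and symmetry, using the standard characterization of majorization by doubly stochastic matrices (the Hardy--Littlewood--P\'olya theorem).

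First I would recall (or invoke as a known fact) that for vectors in $\R^n$, $\vec{x}\prec\vec{y}$ if and only if there exists a doubly stochastic matrix $D$ such that $\vec{x}=D\vec{y}$. This is the bridge from the combinatorial definition of majorization in Eqs.~(\ref{eq:major1}) and (\ref{eq:major2}) to a convex-geometric statement that interfaces naturally with concavity.

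Next I would apply the Birkhoff--von Neumann theorem, which tells us that every doubly stochastic $D$ lies in the convex hull of the $n\times n$ permutation matrices. Writing $D=\sum_i \lambda_i M_i$ with $\lambda_i\geq 0$, $\sum_i\lambda_i=1$, and each $M_i$ a permutation matrix, we get
\begin{align}
\vec{x}=\sum_i \lambda_i M_i \vec{y}.
\end{align}
Now the concavity assumption in Eq.~(\ref{phiconcon}), extended by induction to arbitrary finite convex combinations, yields
\begin{align}
\phi(\vec{x})\geq \sum_i \lambda_i \,\phi(M_i \vec{y}),
\end{align}
and the symmetry assumption in Eq.~(\ref{sym}) gives $\phi(M_i \vec{y})=\phi(\vec{y})$ for every $i$. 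Combining these and using $\sum_i\lambda_i=1$ produces $\phi(\vec{x})\geq\phi(\vec{y})$, which is exactly the Schur-concavity condition Eq.~(\ref{eq:Schur}).

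The only non-routine ingredient is the Hardy--Littlewood--P\'olya characterization of majorization together with Birkhoff's theorem; once these are in hand, the argument is a one-line computation. I would expect the ``main obstacle'' to be essentially presentational rather than mathematical: deciding whether to cite these two classical results (as the paper's reference~\cite{MO} covers them) or to sketch a proof of the HLP direction. Given that the proposition is stated as a known fact from~\cite{MO}, a clean three-step proof invoking doubly stochastic representability, Birkhoff decomposition, and then concavity plus symmetry is the most economical route.
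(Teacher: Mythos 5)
Your argument is correct: the Hardy--Littlewood--P\'olya characterization of $\vec{x}\prec\vec{y}$ via a doubly stochastic matrix, the Birkhoff--von Neumann decomposition into permutation matrices, and then concavity (extended to finite convex combinations) together with symmetry give $\phi(\vec{x})\ge\phi(\vec{y})$ exactly as claimed. The paper itself offers no proof of Proposition~\ref{propchur}; it is quoted as a known fact from~\cite{MO}, and the proof you give is precisely the classical one found there, so there is nothing to compare beyond noting that your write-up supplies the argument the paper delegates to its reference.
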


Now we show the monotonicity of Tsallis-$q$ entropy under majorization.
\begin{Lem}
For given probability distributions $P=\{p_{1}, p_{2}, \cdots, p_{n} \}$ and $Q=\{q_{1}, q_{2},~ \cdots,~ q_{n} \}$
satisfying $1\geq p_{1} \geq \cdots \geq p_{n} \geq 0$,
$1\geq q_{1} \geq \cdots \geq q_{n} \geq 0$ and $\sum_{i=1}^{n}p_{i}=\sum_{i=1}^{n}q_{i}=1$, if
\begin{align}
\sum_{i=1}^{k}p_{i} \leq  \sum_{i=1}^{k}q_{i},
\label{eq:TsSch1}
\end{align}
for each $k=1, \cdots, n$, then
\begin{align}
H_{q}(P) \geq H_{q} (Q),
\label{eq:TsSch2}
\end{align}
where $H_{q}(P)$ and $H_{q} (Q)$ are Tsallis-$q$ entropies of the probability distributions $P$ and $Q$ respectively.
\label{TsSch}
\end{Lem}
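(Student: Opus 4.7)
The plan is to recognize that the hypothesis is precisely the statement $P \prec Q$ in the sense of majorization, and then to invoke Proposition~\ref{propchur} after verifying that $H_q$ is both symmetric and concave on the probability simplex.

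First, I would observe that since the entries of $P$ and $Q$ are already arranged in nonincreasing order and share the common total sum $1$, the inequalities~(\ref{eq:TsSch1}) are precisely the majorization conditions~(\ref{eq:major1}) and~(\ref{eq:major2}) that characterize $P \prec Q$. Hence the claim reduces to showing that $H_q$ is Schur-concave, and for that it suffices, by Proposition~\ref{propchur}, to verify that $H_q$ is symmetric and concave when viewed as a real-valued function on $\R^n$ (restricted to the probability simplex).

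Symmetry is immediate from Eq.~(\ref{ctsal}): $H_q(\vec{x})$ depends on $\vec{x}$ only through the symmetric sum $\sum_j x_j^q$, so $H_q(M\vec{x}) = H_q(\vec{x})$ for any permutation matrix $M$. For concavity, I would rewrite
\[
H_q\left(\vec{x}\right) = \frac{1}{q-1} + \sum_{j} g\left(x_j\right), \qquad g\left(t\right) = -\frac{t^q}{q-1},
\]
which reduces the question to concavity of the scalar function $g$ on $[0,\infty)$. Since $t \mapsto t^q$ is concave for $0 < q < 1$ and convex for $q > 1$, and since dividing by $q-1$ has opposite sign in these two regimes, the function $g$ turns out to be concave in both cases. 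Consequently $H_q$ is a sum of concave scalar functions (plus an additive constant), and so is concave on $\R^n$.

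Proposition~\ref{propchur} then yields Schur-concavity of $H_q$, and hence $P \prec Q$ implies $H_q(P) \geq H_q(Q)$, which is the desired inequality~(\ref{eq:TsSch2}). The only mildly delicate point is keeping the sign of $q-1$ straight when verifying concavity across the two regimes $0 < q < 1$ and $q > 1$; once that bookkeeping is done, the conclusion follows directly from the proposition with no further work.
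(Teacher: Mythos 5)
Your proposal is correct and follows essentially the same route as the paper: identify the hypothesis with the majorization $P \prec Q$, reduce to Schur-concavity via Proposition~\ref{propchur}, and verify symmetry and concavity of $H_q$. Your coordinate-wise decomposition $H_q(\vec{x}) = \tfrac{1}{q-1} + \sum_j g(x_j)$ with $g''(t) = -q\,t^{q-2} \le 0$ is just a cleaner packaging of the paper's diagonal Hessian computation, and it correctly handles both regimes $0<q<1$ and $q>1$.
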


\begin{proof}
For the closed unit interval $I=[0,1]$ of $\R$, we can consider the probability distributions $P$ and $Q$ as
$n$-dimensional vectors in $I^n \subset \R^n$,
\begin{align}
P\leftrightarrow&\vec{p}=\left(p_{1}, p_{2}, \cdots, p_{n}\right),\nonumber\\
Q\leftrightarrow&\vec{q}=\left(q_{1}, q_{2}, \cdots, q_{n}\right).
\label{pvec}
\end{align}
(Thus we will use the notation $H_q\left(\vec{p}\right)$ equivalently as $H_q\left(P\right)$.)

Because $P$ and $Q$ are probability distributions, Eq.~(\ref{eq:TsSch1}) implies that $\vec{p}$ is majorized by
$\vec{q}$. Thus showing Eq.~(\ref{eq:TsSch2}) is equivalent to show that Tsallis-$q$ entropy is Schur-concave.
Furthermore, Proposition~\ref{propchur} says that it is sufficient
to show that Tsalli-$q$ entropy is symmetric and concave.

We first note that Tsallis-$q$ entropy is clearly symmetric by its
definition in Eq.~(\ref{ctsal}), that is, for any probability distribution
$P=\vec{p} \in I^n$,
\begin{align}
S_q\left(\vec{p}\right)=S_q\left(M\vec{p}\right)
\label{tassym}
\end{align}
for any $n$-dimensional permutation $M$.
The concavity of Tsallis-$q$ entropy follows from a simple calculus;
for any probability distribution $\vec{p} \in I^n$
with the normalization condition $\sum_{k=1}^{n} p_k =1$, by introducing the Lagrange multiplier $\lambda$,
we have the Hessian matrix
\begin{align}
\frac{\partial}{\partial p_i \partial p_j}\left(S_q\left(\vec{p}\right)-\lambda\left(\sum_{k=1}^{n}p_k -1\right)\right)=-qp_i^{p-2}\delta_{ij},
\label{Hessian}
\end{align}
which is clearly negative semi-definite for $1\leq i,j\leq n$ and $0<q$. Thus Tsallis-$q$ entropy is Schur-concave for $0<q$, and this completes the proof.
\end{proof}

Let us consider a one-parameter class of multi-qubit states,
\begin{align}
\rho_{\mu} = (1-\mu)\frac{I^{\otimes n}}{2^{n}}+\mu\ket{\psi}\bra{\psi}
\label{wer}
\end{align}
for $\mu \in [0,1]$, where $I$ is the $2\times2$ identity operator and $\ket{\psi}=(\ket{00\cdots 0}+\ket{11\cdots 1})/ \sqrt{2}$.
The class of states in Eq.~(\ref{wer}) is known as the $n$-qubit Werner-GHZ state, which is a mixture of a fully mixed state
(thus no $q$-GQD) and a maximally correlated pure state (thus maximal $q$-GQD).
The following theorem allows an analytic evaluation of $q$-GQD for $n$-qubit Werner-GHZ states.

\begin{Thm}
For the $n$-qubit Werner-GHZ state $\rho_{\mu} = (1-\mu)I^{\otimes n}/2^{n}+\mu\ket{\psi}\bra{\psi}$,
its $q$-GQD is
\begin{align}
{\mathcal D}_q\left(\rho_{\mu}\right)=&\left(\frac{1-\mu}{2^{n}}+\mu\right)^{q}\ln_{q}\left(\frac {1-\mu}{2^{n}} +\mu\right)\nonumber\\
&~+\left(\frac {1-\mu}{2^{n}}\right)^{q}\ln_{q}\left(\frac {1-\mu}{2^{n}}\right) \nonumber\\
&~-2\left(\frac {1-\mu}{2^{n}} +\frac{\mu}{2}\right)^{q}\ln_{q}\left(\frac {1-\mu}{2^{n}}+\frac{\mu}{2}\right).
\label{eq:werfor}
\end{align}
\label{Thm:werfor}
\end{Thm}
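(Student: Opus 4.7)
The plan is to reduce $\mathcal{D}_q(\rho_\mu)$ to a one-term minimization using the symmetry of the Werner--GHZ state, invoke the Schur-concavity of Tsallis-$q$ entropy (Lemma~\ref{TsSch}) to pin down the computational basis as an optimal product measurement, and finally read off the claimed expression from the two explicit spectra.

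First I would exploit the fact that every single-qubit marginal $\rho^{A_i}$ equals $I/2$, since both $I^{\otimes n}/2^n$ and $\ket{\psi}\bra{\psi}$ have maximally mixed one-qubit reductions. For any local von Neumann measurement $\Phi=\{\Pi_{j_1}^{A_1}\otimes\cdots\otimes\Pi_{j_n}^{A_n}\}$ the marginal outcome probability on $A_i$ is then $1/2$ regardless of the chosen projector, so $\Phi(\rho^{A_i})=I/2$ and $S_q(\Phi(\rho^{A_i}))=S_q(\rho^{A_i})$ for each $i$. Inserted into Definition~\ref{Def:q-nGQD}, the single-site entropy sums cancel and leave
\begin{align}
\mathcal{D}_q(\rho_\mu)=\min_{\Phi}\left[S_q(\Phi(\rho_\mu))-S_q(\rho_\mu)\right],
\end{align}
so the task becomes finding the $\Phi$ that minimizes $S_q(\Phi(\rho_\mu))$.

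Second, in an arbitrary product basis $\{\ket{b_{j_1}^{(1)}}\otimes\cdots\otimes\ket{b_{j_n}^{(n)}}\}$ the post-measurement state is diagonal with eigenvalues $p_j=\frac{1-\mu}{2^n}+\mu|\langle b_j|\psi\rangle|^2$, and $\sum_j|\langle b_j|\psi\rangle|^2=1$. The computational basis $\Phi^{*}$ produces two eigenvalues $\frac{1-\mu}{2^n}+\frac{\mu}{2}$ (from the GHZ-supporting basis vectors $\ket{0\cdots 0}$ and $\ket{1\cdots 1}$) and $2^n-2$ eigenvalues $\frac{1-\mu}{2^n}$. Because only the $\mu|\langle b_j|\psi\rangle|^2$ contribution varies with $\Phi$ and has fixed total $\mu$, after reordering the decreasing partial sums the majorization $\vec{p}\prec\vec{p}^{*}$ reduces to a single top-entry bound $\max_j|\langle b_j|\psi\rangle|^2\le 1/2$; the higher partial sums are automatically $\le 1$. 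By Lemma~\ref{TsSch} this yields $S_q(\Phi(\rho_\mu))\ge S_q(\Phi^{*}(\rho_\mu))$, so $\Phi^{*}$ is optimal.

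The analytic core is the overlap bound. Writing $c_i=\langle b_{j_i}^{(i)}|0\rangle$ and $s_i=\langle b_{j_i}^{(i)}|1\rangle$, one has $\langle b_j|\psi\rangle=\frac{1}{\sqrt{2}}\left(\prod_i c_i+\prod_i s_i\right)$; a single Cauchy--Schwarz step peeling off the $n$-th tensor factor gives
\begin{align}
\left|\prod_{i=1}^n c_i+\prod_{i=1}^n s_i\right|^2\le\left(\prod_{i=1}^{n-1}|c_i|^2+\prod_{i=1}^{n-1}|s_i|^2\right)(|c_n|^2+|s_n|^2),
\end{align}
and since $|c_n|^2+|s_n|^2=1$ while each $|c_i|^2,|s_i|^2\le 1$, the right-hand side is bounded by $|c_1|^2+|s_1|^2=1$, so $|\langle b_j|\psi\rangle|^2\le 1/2$ whenever $n\ge 2$ (equivalently, no $n$-qubit product state can overlap the GHZ state by more than $1/\sqrt 2$). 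With $\Phi^{*}$ known to be optimal, Eq.~(\ref{eq:werfor}) follows by substituting the two spectra into $S_q(\rho)=-\sum_k\lambda_k^q\ln_q\lambda_k$ and noting that of the $2^n-1$ and $2^n-2$ copies of $(\frac{1-\mu}{2^n})^q\ln_q\frac{1-\mu}{2^n}$ coming from $-S_q(\rho_\mu)$ and $S_q(\Phi^{*}(\rho_\mu))$ respectively, all but one cancel, leaving the three-term expression claimed. The main obstacle is the Cauchy--Schwarz estimate above; the remaining simplifications are bookkeeping in the $\ln_q$ arithmetic.
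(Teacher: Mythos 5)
Your proposal is correct, and its overall skeleton coincides with the paper's: both reduce ${\mathcal D}_q(\rho_\mu)$ to $-S_q(\rho_\mu)+\min_\Phi S_q(\Phi(\rho_\mu))$ using the maximally mixed one-qubit marginals, both identify the computational basis as optimal via Lemma~\ref{TsSch} (Schur concavity of Tsallis-$q$ entropy), and both finish with the same spectral bookkeeping. Where you genuinely diverge is the key majorization step. The paper parametrizes each local projector by a Bloch vector $\vec{\Pi}_i=(\alpha_i,\beta_i,\gamma_i)$, writes out the $2^n$ eigenvalues of $\Phi(\rho_\mu)$ explicitly as in Eq.~(\ref{phirhoeigen}), and then asserts ``due to the symmetry'' that the sorted eigenvalue vector is always majorized by the $\gamma_i=1$ vector --- an assertion it does not really verify. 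You instead write the post-measurement eigenvalues as $p_j=\tfrac{1-\mu}{2^n}+\mu\,t_j$ with $t_j=|\langle b_j|\psi\rangle|^2$ and $\sum_j t_j=1$, observe that all partial-sum conditions for $k\ge 2$ are then automatic, so the entire majorization collapses to the single bound $\max_j t_j\le 1/2$, which you prove by one Cauchy--Schwarz step (valid for $n\ge 2$, which is the relevant regime). This buys a fully explicit and verifiable argument where the paper relies on an appeal to symmetry, and it isolates the genuinely nontrivial fact (no product state overlaps the GHZ state by more than $1/\sqrt{2}$) that underlies the optimality of the computational-basis measurement. The final cancellation count ($2^n-1$ versus $2^n-2$ copies of the $\left(\tfrac{1-\mu}{2^n}\right)^q\ln_q\tfrac{1-\mu}{2^n}$ term) is also correct and reproduces Eq.~(\ref{eq:werfor}).
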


An analytic evaluation of GQD for $n$-qubit Werner-GHZ states was proposed in~\cite{Xu}, and our proof method
for Theorem~\ref{Thm:werfor} follows the construction therein.

\begin{proof}
We first note that the reduced density matrix $\rho^{A_k}$ of $\rho_{\mu}$ onto each one-qubit subsystem
$A_1, \cdots A_n$ is proportional to identity operator $I/2$. In this case, it follows from the definition
that the $q$-GQD of $\rho_{\mu}$ is simplified as
\begin{align}
{\mathcal D}_q\left(\rho_{\mu}\right)=-S_q\left(\rho_{\mu}\right)+
\min_{\Phi}S_q\left(\Phi\left(\rho_{\mu}\right)\right)
\label{simple}
\end{align}
with the minimization over all local von Neumann measurements $\Phi$.
Because $\rho_{\mu}$ has $2^n$ eigenvalues
\begin{align}
\{\frac{1-\mu }{2^{n}}+\mu ,\frac{1-\mu }{2^{n}},\frac{1-\mu }{2^{n}},...,\frac{1-\mu }{2^{n}}\},
\label{eigenrho}
\end{align}
$S_q\left(\rho_{\mu}\right)$ in Eq.~(\ref{simple}) can be easily calculated.

We also note that a von Neumann measurement on single qubit can be expressed as
\begin{eqnarray}
\Pi _{0}=\frac{1}{2}\left(I+\vec{\Pi }\cdot \vec{\sigma }\right),   \ \
\Pi _{1}=\frac{1}{2}\left(I-\vec{\Pi }\cdot \vec{\sigma }\right),
\end{eqnarray}
where $\vec{\Pi }=(\alpha ,\beta ,\gamma )$ is a real vector with unit length and
$\vec{\sigma }=\left(\sigma_x, \sigma_y, \sigma_z\right)$ with Pauli matrices
$\sigma_x$, $\sigma_y$ and $\sigma_z$.

Now for any $n$-qubit local projective measurement $\Phi$, we can identify it as
$\Phi=\{\vec{\Pi }_{1}\}\otimes \cdots \otimes \{\vec{\Pi }_{n}\}$ for some $n$ number of real vectors with unit length
$\vec{\Pi }_{i}=(\alpha _{i},\beta_{i},\gamma _{i})$ for $i=1,\cdots,n$.

After a bit of algebra, it can be shown that the quantum state $\Phi\left(\rho_{\mu}\right)$
obtained after a non-selective local von Neumann measurement $\Phi$ on each subsystem has
$2^n$ eigenvalues
\begin{align}
\frac{1-\mu }{2^{n}}+\frac{\mu }{2}[\prod _{i=1}^{n}\frac{1+(-1)^{m_{i}}
\gamma _{i}}{2}+\prod _{i=1}^{n}\frac{I-(-1)^{m_{i}}\gamma _{i}}{2}\nonumber \\
+\prod _{i=1}^{n}\frac{\alpha _{i}+(-1)^{m_{i}}i\beta _{i}}{2}+\prod _{i=1}^{n}%
\frac{\alpha _{i}-(-1)^{m_{i}}i\beta _{i}}{2}],
\label{phirhoeigen}
\end{align}
with $m_{1},m_{2},...,m_{n}\in \{0,1\}.$

Due to the symmetry of Eq.~(\ref{phirhoeigen}) with respect to $m_i$'s, we note that the eigenvalues corresponding to
$\{m_{1},m_{2},...,m_{n}\}$ and $\{1-m_{1},1-m_{2},...,1-m_{n}\}$ are equal. For this reason,
if we consider a real vector consisting $2^n$ eigenvalues of $\Phi\left(\rho_{\mu}\right)$ in Eq.~(\ref{phirhoeigen}),
in decreasing order, it is always majorized by the vector with $\gamma _{i}=1$ for all $i$, that is
\begin{align}
\left(\frac{1-\mu }{2^{n}}+\frac{\mu }{2},\frac{1-\mu }{2^{n}}+\frac{\mu }{2},\frac{1-\mu }{2^{n}},\frac{1-\mu }{2^{n}},...,\frac{1-\mu }{2^{n}}\right).
\label{phirhomon}
\end{align}
Thus by the monotonicity of Tsallis-$q$ entropy under majorization in Lemma~\ref{TsSch}, we assert that
$S_q\left(\Phi\left(\rho_{\mu}\right)\right)$ achieves its minimum when $\gamma _{i}=1$ for all $i$.
From Eqs.~(\ref{eigenrho}) and (\ref{phirhomon}), we are ready to have an analytic evaluation of Eq.~(\ref{simple}),
which leads us to Eq.~(\ref{eq:werfor}).
\end{proof}

Now, let us consider another class of multi-qubit states whose analytic evaluation of $q$-GQD is feasible.

\begin{Thm}
Let us consider an $n$-qubit state
\begin{align}
\rho=\frac{1}{2^{2}}\left(I^{\otimes n}+c_{1}\sigma_{x}^{\otimes n}+c_{2}\sigma _{y}^{\otimes n}+c_{3} \sigma _{z}^{\otimes n}\right)
\label{clas2}
\end{align}
where $I$ is $2 \times 2$ identity operator, and $c_{1}$, $c_{2}$ and $c_{3}$
are real numbers constrained by $0 \leq d=\sqrt{c_1^2+c_2^2+c_3^2}\leq 1$.
If $n$ is odd,
\begin{widetext}
\begin{align}
{\mathcal D}_q(\rho)=-\frac{2^{n-1}}{q-1}\left[\left(\frac {1+c}{2^{n}}\right)^{q}+\left(\frac {1-c}{2^{n}}\right)^{q}
-\left(\frac {1+d}{2^{n}}\right)^{q}-\left(\frac {1-d}{2^{n}}\right)^{q}\right],
\label{forodd}
\end{align}
\end{widetext}
where $c =\max \{|c_{1}|, |c_{2}|, |c_{3}| \}$.
If $n$ is even,
\begin{widetext}
\begin{align}
{\mathcal D}_q(\rho)=-\frac{2^{n-2}}{q-1}\left[2\left(\frac {1+c}{2^{n}}\right)^{q} +2\left(\frac {1-c}{2^{n}}\right)^{q}
-\sum_{j=1}^{4}\left(\frac {\lambda _{j}}{2^{n}}\right)^{q}\right]
\label{foreven}
\end{align}
\end{widetext}
with $\lambda _{1},~\lambda _{2},~\lambda _{3},~\lambda _{4}\in [0,1]$ such that
\begin{align}
\lambda _{1}&= 1+c_{3} +c_{1} +(-1)^{n/2}c_{2},\nonumber\\
\lambda _{2}&= 1+c_{3} -c_{1} -(-1)^{n/2}c_{2},\nonumber\\
\lambda _{3}&= 1-c_{3} +c_{1} -(-1)^{n/2}c_{2},\nonumber\\
\lambda _{4}&= 1-c_{3} -c_{1} +(-1)^{n/2}c_{2}.
\label{lambdas}
\end{align}
\label{The:forclas2}
\end{Thm}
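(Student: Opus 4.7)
The plan is to mirror the proof of Theorem~\ref{Thm:werfor}. Every one-qubit marginal of $\rho$ equals $I/2$, because partial-tracing any single tensor factor of $\sigma_\alpha^{\otimes n}$ yields zero. Since a local projective measurement preserves $I/2$, the single-qubit entropy contributions cancel between $\mathcal{I}_q(\rho)$ and $\mathcal{I}_q(\Phi(\rho))$, reducing the problem to
\[
{\mathcal D}_q(\rho) = \min_\Phi S_q\bigl(\Phi(\rho)\bigr) - S_q(\rho).
\]
First I would diagonalize $\rho$ from the algebraic structure of the $\sigma_\alpha^{\otimes n}$. For $n$ odd the three operators pairwise anticommute and each squares to $I$, so together they generate a Pauli algebra whose unique $2$-dimensional irreducible representation appears with multiplicity $2^{n-1}$; consequently $\rho$ has exactly two eigenvalues $(1\pm d)/2^{n}$, each with multiplicity $2^{n-1}$. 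For $n$ even they commute and satisfy $\sigma_x^{\otimes n}\sigma_y^{\otimes n}\sigma_z^{\otimes n} = i^n I = (-1)^{n/2} I$, so the joint $\pm 1$ eigenvalue triples $(x,y,z)$ are constrained by $xyz=(-1)^{n/2}$, producing precisely the four eigenvalues $\lambda_j/2^n$ listed in the theorem, each with multiplicity $2^{n-2}$.

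Next I would parameterize an arbitrary local rank-one measurement by unit vectors $\vec{\Pi}_i = (\alpha_i,\beta_i,\gamma_i)$, so that $\Pi_{s_i}^{A_i} = \tfrac{1}{2}(I+(-1)^{s_i}\vec{\Pi}_i\cdot\vec{\sigma})$. A direct trace computation gives
\[
p_{s_1\cdots s_n} = \T\!\Bigl[\rho\bigotimes_i \Pi_{s_i}^{A_i}\Bigr] = \frac{1+(-1)^{s_1+\cdots+s_n}K}{2^n},
\]
with $K = c_1 A + c_2 B + c_3 C$ and $A=\prod_i\alpha_i$, $B=\prod_i\beta_i$, $C=\prod_i\gamma_i$. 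Thus, irrespective of the parity of $n$, $\Phi(\rho)$ carries only the two eigenvalues $(1\pm K)/2^n$, each with multiplicity $2^{n-1}$.

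The decisive step is to minimize $S_q(\Phi(\rho))$, which by Lemma~\ref{TsSch} (Schur-concavity of Tsallis entropy) reduces to maximizing $|K|$. The AM--GM inequality applied to each unit vector yields
\[
|A|^{2/n}+|B|^{2/n}+|C|^{2/n}\leq \frac{1}{n}\sum_{i=1}^n(\alpha_i^2+\beta_i^2+\gamma_i^2)=1;
\]
since $|A|,|B|,|C|\in[0,1]$ and $2/n\leq 1$ for $n\geq 2$, this gives $|A|+|B|+|C|\leq 1$ and hence $|K|\leq c$. Equality is attained by aligning every $\vec{\Pi}_i$ with the coordinate axis carrying the maximal $|c_k|$, and the eigenvalue vector produced by $|K|=c$ majorizes every competitor once both are sorted in decreasing order. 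Substituting the known spectra of $\rho$ and of this optimal $\Phi^{\ast}(\rho)$ into $S_q(\cdot)=(1-\T(\cdot)^q)/(q-1)$ and subtracting reproduces the two formulas of the statement, the case split originating purely from the spectrum of $\rho$.

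The main obstacle I anticipate is precisely this optimization: one must confirm both that $|K|\leq c$ uniformly over all cross-configurations of the $\vec{\Pi}_i$ and that the bound is actually attained within the unit-vector constraint. The AM--GM estimate handles the bound cleanly and attainability is immediate once the maximizing axis is identified, but both require the explicit algebra above to rule out nontrivial mixed-axis measurements that might otherwise look competitive.
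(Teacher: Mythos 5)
Your proposal is correct and follows the same skeleton as the paper's proof: reduce to ${\mathcal D}_q(\rho)=\min_\Phi S_q(\Phi(\rho))-S_q(\rho)$ using the maximally mixed marginals, compute the two spectra, and invoke the Schur-concavity lemma to turn the minimization of $S_q(\Phi(\rho))$ into the maximization of $|K|=|c_1\prod_i\alpha_i+c_2\prod_i\beta_i+c_3\prod_i\gamma_i|$. You differ in two places, both to your advantage. First, you diagonalize $\rho$ from the (anti)commutation structure of $\sigma_x^{\otimes n},\sigma_y^{\otimes n},\sigma_z^{\otimes n}$ and the identity $\sigma_x^{\otimes n}\sigma_y^{\otimes n}\sigma_z^{\otimes n}=(-1)^{n/2}I$ for even $n$, rather than the paper's appeal to the characteristic polynomial of a matrix supported on the diagonal and antidiagonal; both give the same spectra. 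Second, and more substantively, the paper bounds $|K|$ by a sequential optimization (Cauchy--Schwarz over the last unit vector, then a convex-combination argument over the next, glossed as ``a simple calculus''), whereas you prove the global bound $|A|+|B|+|C|\le 1$ in one stroke via AM--GM applied to each of $|A|^{2/n},|B|^{2/n},|C|^{2/n}$ together with $x\le x^{2/n}$ on $[0,1]$ for $n\ge 2$; this immediately gives $|K|\le c$ with equality on the dominant axis, and is cleaner and more self-contained than the paper's iteration. The only caveat, shared with the paper's own argument, is that the bound implicitly assumes $n\ge 2$ (for $n=1$ the exponent $2/n>1$ and the step $|A|\le|A|^{2/n}$ fails, but the theorem is vacuous there anyway).
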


Our proof method for Theorem~\ref{The:forclas2} is based on the construction in~\cite{Xu},
which is an analytic evaluation of GQD for the class of states in Eq.~(\ref{clas2}).

\begin{proof}
Due to the traceless property of Pauli matrices, it is clear that the reduced density matrix $\rho^{A_k}$ of $\rho$
onto each one-qubit subsystem $A_1, \cdots A_n$ is proportional to identity operator $I/2$. Thus the $q$-GQD of $\rho$ is
simplified as
\begin{align}
{\mathcal D}_q\left(\rho\right)=-S_q\left(\rho\right)+
\min_{\Phi}S_q\left(\Phi\left(\rho\right)\right)
\label{simple2}
\end{align}
with the minimization over all local von Neumann measurements $\Phi$.

We also note that $\rho $ has nonzero elements only on the principle diagonal and the antidiagonal, thus a direct calculation
of the characteristic polynomial $det\left(\rho -xI^{\otimes n}\right)=0$ leads us to the eigenvalues of $\rho$;
if $n$ is odd, the
eigenvalues of $\rho $ are
\begin{align}
\{\frac{1}{2^{n}}\left(1\pm \sqrt{c_{1}^{2}+c_{2}^{2}+c_{3}^{2}}\right)\}=\{\frac{1}{2^{n}}(1\pm d)\},
\label{eigodd}
\end{align}
each of them has multiplicity $2^{n-1}$.
When $n$ is even, the eigenvalues of $\rho$ are
\begin{align}
\{\frac{1+c_{3}}{2^{n}}\pm \frac{c_{1}+(-1)^{n/2}c_{2}}{2^{n}},\frac{1-c_{3}%
}{2^{n}}\pm \frac{c_{1}-(-1)^{n/2}c_{2}}{2^{n}}\},
\label{eigeven}
\end{align}
each of them has multiplicity $2^{n-2}$.

Similar to the proof of Theorem~\ref{Thm:werfor}, by identifying an $n$-qubit local projective measurement $\Phi$ with
a set of real vectors $\{\vec{\Pi }_{i}=(\alpha _{i},\beta_{i},\gamma _{i})\}$ for $i=1,\cdots, n$,
the eigenvalues of $\Phi\left(\rho\right)$ can be obtained as
\begin{align}
\{\frac{1}{2^{n}}\left[1\pm \left(c_{1}\prod _{i=1}^{n}\alpha _{i}+c_{2}\prod
_{i=1}^{n}\beta _{i}+c_{3}\prod _{i=1}^{n}\gamma _{i}\right)\right]\},
\label{phieigen2}
\end{align}
each of them has multiplicity $2^{n-1}$.
By Lemma~\ref{TsSch}, we note that minimizing $S_q\left(\Phi\left(\rho\right)\right)$ is equivalent
to maximizing
\begin{align}
\left|c_{1}\prod _{i=1}^{N}\alpha _{i}+c_{2}\prod _{i=1}^{N}\beta _{i}+c_{3}\prod
_{i=1}^{N}\gamma _{i}\right|,
\label{max1}
\end{align}
over all possible vectors $\{\vec{\Pi }_{i}\}_{i=1}^{n}$.

To maximize Eq.~(\ref{max1}), suppose that $\{\vec{\Pi }_{i}\}_{i=1}^{n-1}$ are given, then Eq.~(\ref{max1})
can be regarded as the size of the inner product of two vectors
\begin{align}
\left|\left(\alpha _{n},\beta _{n},\gamma _{n}\right)\cdot \left(c_{1}\prod _{i=1}^{n-1}\alpha
_{i},c_{2}\prod _{i=1}^{n-1}\beta _{i},c_{3}\prod _{i=1}^{n-1}\gamma _{i}\right)\right|.
\label{max2}
\end{align}
Because $\left(\alpha _{n},\beta _{n},\gamma _{n}\right)$ is a unit vector, Eq.~(\ref{max2})
clearly obtains
its maximum
\begin{align}
\left(c_{1}^{2}\prod _{i=1}^{n-1}\alpha _{i}^{2}+c_{2}^{2}\prod _{i=1}^{n-1}\beta
_{i}^{2}+c_{3}^{2}\prod _{i=1}^{n-1}\gamma _{i}^{2}\right)^{1/2},
\label{max3}
\end{align}
that is, $\left(\alpha _{n},\beta _{n},\gamma _{n}\right)$ is in the same direction with
$\left(c_{1}\prod _{i=1}^{n-1}\alpha_{i},c_{2}\prod _{i=1}^{n-1}\beta _{i},c_{3}\prod _{i=1}^{n-1}\gamma _{i}\right)$.

Now suppose $\{\vec{\Pi }_{i}\}_{i=1}^{n-2}$ are given.
Because $\left(\alpha _{n-1},\beta _{n-1},\gamma _{n-1}\right)$ is also a unit vector, a simple calculus implies that
Eq.~(\ref{max3}) obtains its maximum as
\begin{align}
\left(\max\{c_{1}^{2}\prod _{i=1}^{n-2}\alpha _{i}^{2},c_{2}^{2}\prod
_{i=1}^{n-2}\beta _{i}^{2},c_{3}^{2}\prod _{i=1}^{n-2}\gamma _{i}^{2}\}\right)^{1/2}.
\label{max4}
\end{align}
Thus the maximum of Eq.~(\ref{max4}) over all possible $\{\vec{\Pi }_{i}\}_{i=1}^{n-2}$ is clearly
\begin{align}
c=\max\{|c_{1}|,|c_{2}|,|c_{3}|\}.
\label{max5}
\end{align}
In other words, $S_q\left(\Phi\left(\rho\right)\right)$ is Eq.~(\ref{simple2}) can be minimized when
$\Phi\left(\rho\right)$ has eigenvalues
\begin{align}
\{\frac{1}{2^{n}}\left(1\pm c\right)\},
\label{phieigen3}
\end{align}
each of them has multiplicity $2^{n-1}$, that is,
\begin{align}
\min_{\Phi}S_q\left(\Phi\left(\rho\right)\right)=&-2^{n-1}\left[\left(\frac{1+c}{2^n}\right)^q \ln_q \left(\frac{1+c}{2^n}\right)\right]\nonumber\\
&-2^{n-1}\left[\left(\frac{1-c}{2^n}\right)^q \ln_q \left(\frac{1-c}{2^n}\right)\right]\nonumber\\
=&\frac{1-2^{n-1}\left[\left(\frac{1+c}{2^n}\right)^q+\left(\frac{1-c}{2^n}\right)^q\right]}{q-1}.
\label{miniphirho}
\end{align}

If $n$ is odd, Eq.~(\ref{eigodd}) enables us to evaluate $S_q\left(\rho\right)$ in Eq.~(\ref{simple2}), thus together with
Eq.~(\ref{miniphirho}), we have the $q$-GQD of $\rho$ in Eq.~(\ref{forodd}).
For an even $n$, Eq.~(\ref{eigeven}) together with Eq.~(\ref{simple2}) lead us to the $q$-GQD of $\rho$ in Eq.~(\ref{foreven}).
\end{proof}

Due to the continuity of Tsallis entropy with respect to the parameter $q$, Theorems~\ref{Thm:werfor} and \ref{The:forclas2}
recover the results in~\cite{Xu} as a case when $q=1$. We also note that for the case when
$c_{1}=\alpha,~c_{2}=-\alpha, c_{3} =2\alpha-1$ and $n=2$, $\rho$ in Eq.~(\ref{clas2}) is
reduced to a class of two-qubit states, so-called $\alpha$-state, introduced in~\cite{MPP}.
We illustrate the difference of $q$-GQD for two $\alpha$ states (corresponding to $\alpha = 0.58$ and $\alpha = 0.3$)
in Figure~\ref{Fig:GQDdiff}.

\begin{figure}
\centerline{\epsfig{figure=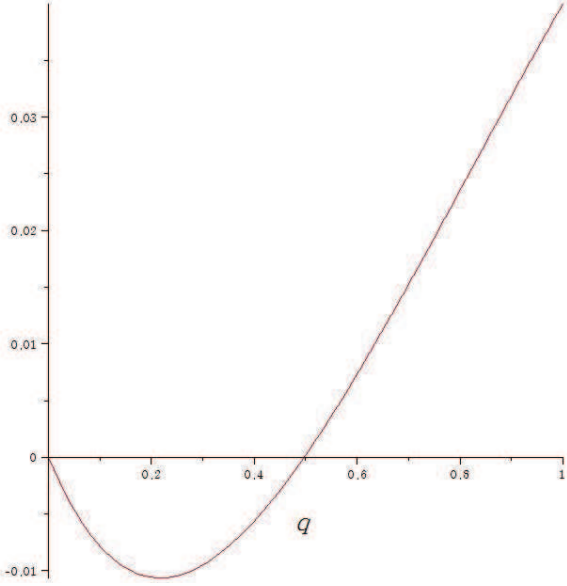, width=0.9\linewidth}}
\caption{\label{Fig:GQDdiff}
Difference between $q$-GQD of two $\alpha$ states for $\alpha = 0.58$ and $\alpha = 0.3$. The vertical axis
represents the difference value of $q$-GQD where the horizontal axis represents the parameter $q$ for $0<q<1$.}
\end{figure}

This difference takes both negative or positive values depending on the range of q. In other words, A relative order among
quantum states in terms of correlation measures is not invariant, because it strongly depends on what entropy function is used to
define the correlation measure.

\section{Monogamy of $q$-GQD in Multi-Party Quantum Systems}
\label{Sec:mono}

In this section, we show that the restricted shareability of $q$-GQD in multi-party quantum system can be
characterized as an inequality, thus $q$-GQD monogamy of multi-party quantum systems.
 By investigating possible decomposition of
multi-party $q$-GQD into bipartite $q$-GQD's among subsystems,
we provide a sufficient condition for a monogamy inequality
of $q$-GQD in multi-party quantum systems.

For a $n$-party quantum state $\rho^{{\bf A}}$, let us consider the difference of
$q$-mutual information in Eq.~(\ref{q-nGQD}) generated by a local von Neumann measurement
$\Phi=\{\Pi_{j}= \Pi_{j_{1}}^{A_{1}} \otimes\Pi_{j_{2}}^{A_{2}}\otimes\cdots\otimes\Pi_{j_{n}}^{A_{n}}\}$,
\begin{align}
{\mathcal D}^{\Phi}_q\left(\rho^{{\bf A}}\right)
=&{\mathcal  I}_q\left(\rho^{{\bf A}}\right)-{\mathcal  I}_q\left(\Phi\left(\rho^{{\bf A}}\right)\right),
\label{ndiff2}
\end{align}
which we will refer to as the $q$-GQD of $\rho^{{\bf A}}$ induced by $\Phi$.

In order to consider a possible decomposition of ${\mathcal D}^{\Phi}_q\left(\rho^{A_{1}\cdots A_{n}}\right)$
in terms of bipartite $q$-GQDs among subsystems,
we further define the $q$-GQD of subsystems concerned with the local von Neumann measurement $\Phi$;
for each $k=1,2,\cdots, n-1$ and the $k+1$-party reduced density matrix $\rho^{A_{1}\cdots A_{k}A_{k+1}}$,
we can consider it as a bipartite quantum state $\rho^{\left(A_{1}\cdots A_{k}\right)A_{k+1}}$
with respect to the bipartition between $A_{1}\cdots A_{k}$ and $A_{k+1}$.
The $q$-mutual information of the bipartite state $\rho^{\left(A_{1}\cdots A_{k}\right)A_{k+1}}$ is then
\begin{align}
{\mathcal I}_q\left(\rho^{\left(A_{1}\cdots A_{k}\right)A_{k+1}}\right)
=&S_q\left(\rho^{A_{1}\cdots A_{k}}\right)+S_q\left(\rho^{A_{k+1}}\right)\nonumber\\
&-S_q\left(\rho^{\left(A_{1}\cdots A_{k}\right)A_{k+1}}\right).
\label{biqmut}
\end{align}
We also have the $q$-mutual information of $\rho^{\left(A_{1}\cdots A_{k}\right)A_{k+1}}$ generated by $\Phi$ as
\begin{widetext}
\begin{align}
{\mathcal I}_q\left(\Phi^{\left(A_{1}\cdots A_{k}\right)A_{k+1}}\left(\rho^{\left(A_{1}\cdots A_{k}\right)A_{k+1}}\right)\right)
=&S_q\left(\Phi^{A_{1}\cdots A_{k}}\left(\rho^{A_{1}\cdots A_{k}}\right)\right)+
S_q\left(\Phi^{A_{k+1}}\left(\rho^{A_{k+1}}\right)\right)\nonumber\\
&-S_q\left(\Phi^{\left(A_{1}\cdots A_{k}\right)A_{k+1}}
\left(\rho^{\left(A_{1}\cdots A_{k}\right)A_{k+1}}\right)\right),
\label{biqmutphi}
\end{align}
\end{widetext}
where $\Phi^{\left(A_{1}\cdots A_{k}\right)A_{k+1}}=\{\left(\Pi_{j_{1}}^{A_{1}}\otimes\cdots\otimes\Pi_{j_{k}}^{A_{k}}\right)\otimes\Pi_{j_{k+1}}^{A_{k+1}}\}$
is the local von Neumann measurement on the first $k+1$ subsystems induced from $\Phi$
and  $\Phi^{\left(A_{1}\cdots A_{k}\right)A_{k+1}}
\left(\rho^{\left(A_{1}\cdots A_{k}\right)A_{k+1}}\right)$ is the density matrix after the non-selective measurement of $\Phi^{\left(A_{1}\cdots A_{k}\right)A_{k+1}}$.
For simplicity, we denote
\begin{align}
\Phi^{\left(A_{1}\cdots A_{k}\right)A_{k+1}}
\left(\rho^{\left(A_{1}\cdots A_{k}\right)A_{k+1}}\right)=\Phi\left(\rho^{\left(A_{1}\cdots A_{k}\right)A_{k+1}}\right),
\label{simple3}
\end{align}
if there is no confusion. The $q$-GQD of the bipartite reduced density matrix $\rho^{\left(A_{1}\cdots A_{k}\right)A_{k+1}}$
generated by the local von Neumann measurement $\Phi$ is then defined as
\begin{align}
{\mathcal D}^{\Phi}_q\left(\rho^{\left(A_{1}\cdots A_{k}\right)A_{k+1}}\right)=&{\mathcal I}_q\left(\rho^{\left(A_{1}\cdots A_{k}\right)A_{k+1}}\right)\nonumber\\
&-{\mathcal I}_q\left(\Phi\left(\rho^{\left(A_{1}\cdots A_{k}\right)A_{k+1}}\right)\right)
\label{inducQDk}
\end{align}
for each $k=1,2,\cdots, n-1$.

Now we have the following theorem about the decomposability of ${\mathcal D}^{\Phi}_q\left(\rho^{A_{1}\cdots A_{n}}\right)$
in terms of bipartite $q$-GQDs among subsystems
\begin{Thm}
For a $n$-party quantum state $\rho^{{\bf A}}=\rho^{A_{1}\cdots A_{n}}$ and
given a non-selective measurement
$\Phi=\{\Pi_{j}= \Pi_{j_{1}}^{A_{1}} \otimes\Pi_{j_{2}}^{A_{2}}\otimes\cdots\otimes\Pi_{j_{n}}^{A_{n}}\}$,
${\mathcal D}^{\Phi}_q\left(\rho^{A_{1}\cdots A_{n}}\right)$
can be decomposed as
\begin{align}
{\mathcal D}^{\Phi}_q\left(\rho^{A_{1}\cdots A_{n}}\right)=\sum_{k=1}^{n-1}
{\mathcal D}^{\Phi}_q\left(\rho^{\left(A_{1}\cdots A_{k}\right)A_{k+1}}\right).
\label{decomp1}
\end{align}
\label{Thm:decomp}
\end{Thm}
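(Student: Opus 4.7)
The plan is to unpack both sides of Eq.~(\ref{decomp1}) using the definition of the induced $q$-GQD in Eq.~(\ref{ndiff2}), apply a telescoping argument on the Tsallis-$q$ entropies of reduced states, and verify that the local measurement $\Phi$ commutes with the partial trace so that reductions of $\Phi(\rho^{\bf A})$ coincide with $\Phi$ acting on reductions. Since ${\mathcal D}^{\Phi}_q = {\mathcal I}_q(\rho) - {\mathcal I}_q(\Phi(\rho))$, it suffices to establish two identities independently, namely
\begin{align}
{\mathcal I}_q\left(\rho^{A_{1}\cdots A_{n}}\right)
=\sum_{k=1}^{n-1}{\mathcal I}_q\left(\rho^{\left(A_{1}\cdots A_{k}\right)A_{k+1}}\right),
\label{prop:tel1}
\end{align}
and the analogous identity for $\Phi(\rho^{A_1\cdots A_n})$.

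First I would expand the right-hand side of Eq.~(\ref{prop:tel1}) using Eq.~(\ref{biqmut}),
\begin{align}
\sum_{k=1}^{n-1}\Big[&S_q\left(\rho^{A_{1}\cdots A_{k}}\right)+S_q\left(\rho^{A_{k+1}}\right)\nonumber\\
&-S_q\left(\rho^{A_{1}\cdots A_{k+1}}\right)\Big],
\end{align}
and observe that the first and the third terms form a telescoping sum that collapses to $S_q(\rho^{A_1})-S_q(\rho^{A_1\cdots A_n})$. Combined with $\sum_{k=1}^{n-1} S_q(\rho^{A_{k+1}}) = \sum_{i=2}^{n} S_q(\rho^{A_i})$, the right-hand side equals $\sum_{i=1}^{n} S_q(\rho^{A_i})-S_q(\rho^{A_1\cdots A_n})$, which is exactly ${\mathcal I}_q(\rho^{\bf A})$ by Eq.~(\ref{q-nqmu}).

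Next I would carry out the same telescoping for the post-measurement state. The only nontrivial point is that, because $\Phi=\Pi_{j_1}^{A_1}\otimes\cdots\otimes\Pi_{j_n}^{A_n}$ has a product structure, the partial trace over any subset of subsystems commutes with $\Phi$; consequently the reduced density matrix of $\Phi(\rho^{\bf A})$ onto $A_1\cdots A_k A_{k+1}$ coincides with $\Phi^{(A_1\cdots A_k)A_{k+1}}\!\left(\rho^{(A_1\cdots A_k)A_{k+1}}\right)$ used in Eq.~(\ref{biqmutphi}). With this identification in hand, the identical telescoping argument yields
\begin{align}
{\mathcal I}_q\left(\Phi\left(\rho^{A_{1}\cdots A_{n}}\right)\right)
=\sum_{k=1}^{n-1}{\mathcal I}_q\left(\Phi\left(\rho^{\left(A_{1}\cdots A_{k}\right)A_{k+1}}\right)\right).
\label{prop:tel2}
\end{align}

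Subtracting Eq.~(\ref{prop:tel2}) from Eq.~(\ref{prop:tel1}) and using the definitions~(\ref{ndiff2}) and (\ref{inducQDk}) term by term immediately gives Eq.~(\ref{decomp1}). The main obstacle is conceptual rather than computational: one must carefully distinguish $\Phi$ restricted to the first $k+1$ subsystems from $\Phi$ acting on $\rho^{\bf A}$ and then reduced, and confirm that they produce the same state on $A_1\cdots A_{k+1}$. Once this commutativity of tensor-product measurements with partial tracing is noted, the proof reduces to a direct telescoping identity for Tsallis-$q$ entropies, requiring no additional inequalities or optimization.
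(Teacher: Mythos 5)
Your proof is correct and follows essentially the same route as the paper's: both rest on the telescoping of the Tsallis-$q$ entropies $S_q\left(\rho^{A_1\cdots A_k}\right)$ of the nested reduced states (the paper peels off one bipartite term at a time by adding and subtracting $S_q\left(\rho^{A_1\cdots A_{n-1}}\right)$ and $S_q\left(\Phi\left(\rho^{A_1\cdots A_{n-1}}\right)\right)$ and iterates, while you telescope ${\mathcal I}_q$ and ${\mathcal I}_q\circ\Phi$ separately and subtract). Your explicit remark that the product-structured measurement commutes with partial trace, so that reductions of $\Phi\left(\rho^{\bf A}\right)$ agree with $\Phi^{\left(A_1\cdots A_k\right)A_{k+1}}$ applied to the reduced state, is the same fact the paper uses implicitly, and the argument is complete.
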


\begin{proof}
From the definition of ${\mathcal D}^{\Phi}_q\left(\rho^{A_{1}\cdots A_{n}}\right)$ in Eq.~(\ref{ndiff2})
together with Eqs.~(\ref{q-nqmu}) and (\ref{q-npimu}), we have
\begin{align}
{\mathcal D}^{\Phi}_q\left(\rho^{A_{1}\cdots A_{n}}\right)
=&\sum_{k=1}^{n-1}\left[S_{q}\left(\rho^{A_{k}}\right)
-S_{q}\left(\Phi^{A_{k}}\left(\rho^{A_{k}}\right)\right)\right]\nonumber\\
&+S_{q}\left(\rho^{A_{n}}\right)-S_{q}\left(\Phi^{A_{n}}\left(\rho^{A_{n}}\right)\right)\nonumber\\
&-S_{q}\left(\rho^{A_{1}\cdots A_{n}}\right)+S_{q}\left(\Phi\left(\rho^{A_{1}\cdots A_{n}}\right)\right).
\label{decpf1}
\end{align}

By adding and subtracting $S_{q}\left(\Phi\left(\rho^{A_{1}\cdots A_{n-1}}\right)\right)$ and
$S_{q}\left(\Phi\left(\rho^{A_{1}\cdots A_{n-1}}\right)\right)$ in Eq.~(\ref{decpf1}), we have
\begin{align}
{\mathcal D}^{\Phi}_q\left(\rho^{A_{1}\cdots A_{n}}\right)
={\mathcal D}^{\Phi}_q\left(\rho^{A_{1}\cdots A_{n-1}}\right)+{\mathcal D}^{\Phi}_q\left(\rho^{\left(A_{1}\cdots A_{n-1}\right)A_{n}}\right).
\label{decpf2}
\end{align}
We then iterate this process to decompose ${\mathcal D}^{\Phi}_q\left(\rho^{A_{1}\cdots A_{n-1}}\right)$ in Eq.~(\ref{decpf2}) into
bipartite $q$-GQDs in smaller subsystems, which eventually leads us to Eq.~(\ref{decomp1}).
\end{proof}

In fact, Theorem~\ref{Thm:decomp} reveals a mutually exclusive relation of bipartite $q$-GQDs
shared in multi-party quantum systems; for each term in the summation of the right-hand side of Eq.~(\ref{decomp1}),
we have
\begin{align}
{\mathcal D}^{\Phi}_q\left(\rho^{\left(A_{1}\cdots A_{k}\right)A_{k+1}}\right)\geq
{\mathcal D}_q\left(\rho^{\left(A_{1}\cdots A_{k}\right)A_{k+1}}\right),
\label{phi-GQD}
\end{align}
for any choice of $\Phi$ (and thus $\Phi^{\left(A_{1}\cdots A_{k}\right)A_{k+1}}$) due to the minimization character of $q$-GQD.
If we assume that the local von Neumann measurement
$\Phi=\{\Pi_{j}= \Pi_{j_{1}}^{A_{1}} \otimes\Pi_{j_{2}}^{A_{2}}\otimes\cdots\otimes\Pi_{j_{n}}^{A_{n}}\}$ is optimal for
the $q$-GQD of $\rho^{A_{1}\cdots A_{n}}$, we have
\begin{align}
{\mathcal D}_q\left(\rho^{A_{1}\cdots A_{n}}\right)=&\sum_{k=1}^{n-1}
{\mathcal D}^{\Phi}_q\left(\rho^{\left(A_{1}\cdots A_{k}\right)A_{k+1}}\right)\nonumber\\
\geq&\sum_{k=1}^{n-1}
{\mathcal D}_q\left(\rho^{\left(A_{1}\cdots A_{k}\right)A_{k+1}}\right).
\label{weakmono}
\end{align}
In other words, the summation of bipartite $q$-GQDs shared among subsystems is bounded by the
$q$-GQD of the whole system. The following corollary provides a sufficient condition that the sum of {\em pairwise} quantum
correlations in terms of $q$-GQD is upper limited by the multipartite quantum correlation
with respect to $q$-GQD.

\begin{Cor}
For an $n$-party quantum state $\rho^{A_{1}\cdots A_{n}}$,
\begin{align}
{\mathcal D}_q\left(\rho^{A_{1}\cdots A_{n}}\right) \geq \sum_{k=1}^{n-1}
{\mathcal D}_q\left(\rho^{A_{1}A_{k+1}}\right),
\label{eq:GQDmono}
\end{align}
provided that the $q$-GQD does not increase under discard of subsystems, that is,
\begin{align}
{\mathcal D}_q\left(\rho^{\left(A_{1}\cdots A_{k}\right)A_{k+1}}\right)\geq{\mathcal D}_q\left(\rho^{A_1A_{k+1}}\right)
\label{noinc}
\end{align}
for $k=1,\cdots, n-1$.
\label{Cor:GQDmono}
\end{Cor}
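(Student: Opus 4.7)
The plan is to chain two inequalities that are already essentially in place in the paper. First, I would invoke the decomposition in Theorem~\ref{Thm:decomp}: choosing $\Phi=\{\Pi_{j_1}^{A_1}\otimes\cdots\otimes\Pi_{j_n}^{A_n}\}$ to be an optimal local von Neumann measurement for ${\mathcal D}_q(\rho^{A_1\cdots A_n})$, Theorem~\ref{Thm:decomp} rewrites the multipartite $q$-GQD as the sum $\sum_{k=1}^{n-1}{\mathcal D}^{\Phi}_q\bigl(\rho^{(A_1\cdots A_k)A_{k+1}}\bigr)$. Since each summand is the $\Phi$-induced bipartite $q$-GQD and the true $q$-GQD is obtained by minimizing over measurements, we have ${\mathcal D}^{\Phi}_q\bigl(\rho^{(A_1\cdots A_k)A_{k+1}}\bigr)\geq {\mathcal D}_q\bigl(\rho^{(A_1\cdots A_k)A_{k+1}}\bigr)$ for every $k$. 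Summing yields exactly Eq.~(\ref{weakmono}), which the paper has already recorded as a consequence of the decomposition.

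Second, I would close the gap between the bipartition $(A_1\cdots A_k)|A_{k+1}$ and the pair $A_1|A_{k+1}$ by applying the hypothesis (\ref{noinc}) term-by-term: for each $k=1,\ldots,n-1$, the assumption gives ${\mathcal D}_q\bigl(\rho^{(A_1\cdots A_k)A_{k+1}}\bigr)\geq{\mathcal D}_q(\rho^{A_1 A_{k+1}})$. Chaining this with the previous inequality and summing over $k$ delivers the monogamy inequality~(\ref{eq:GQDmono}).

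The real content of the argument is not the chain of inequalities, which is a two-line deduction, but the hypothesis (\ref{noinc}) itself: it asserts a monotonicity of $q$-GQD under discarding the intermediate subsystems $A_2,\ldots,A_k$ and keeping only the pair $A_1 A_{k+1}$. This is a nontrivial property that is known to fail in general for discord-type quantities, so the main subtlety in writing the proof is to be explicit that the corollary is conditional. I would therefore state up front that the work done by the proof is merely to combine Theorem~\ref{Thm:decomp} with the assumed discard-monotonicity, and that identifying natural sufficient conditions (on the state or on $q$) under which (\ref{noinc}) actually holds is the genuinely nontrivial question left open.
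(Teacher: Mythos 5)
Your proposal is correct and follows essentially the same route as the paper: the paper also takes $\Phi$ optimal for ${\mathcal D}_q(\rho^{A_1\cdots A_n})$, applies Theorem~\ref{Thm:decomp} together with the minimization property to obtain Eq.~(\ref{weakmono}), and then combines it term-by-term with the hypothesis~(\ref{noinc}) to reach~(\ref{eq:GQDmono}). Your closing caveat that~(\ref{noinc}) is the genuinely nontrivial (and in general false) ingredient matches the paper's own remark following the corollary.
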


Similar to monogamy inequalities of entanglement~\cite{CKW, OV}, Corollary~\ref{Cor:GQDmono} says that the $q$-GQD of total quantum system
${A_{1}\cdots A_{n}}$ serves as an upper bound for the sum of bipartite $q$-GQD between $A_1$ and each of $A_i$'s for $i=2,\cdots,n$,
provided Inequality~(\ref{noinc}).
We also note that Inequality~(\ref{noinc}) is not necessary for monogamy of $q$-GQD; as remarked in \cite{BROS}, the
three-qubit state $\rho^{ABC}=\left(\ket{000}\bra{000}+\ket{1+1}\bra{1+1}\right)/2$ with $\ket{+}=\left(\ket{0}+\ket{1}\right)/\sqrt 2$ has
vanishing ${\mathcal D}_q\left(\rho^{A(BC)}\right)$, whereas ${\mathcal D}_q\left(\rho^{AB}\right)$ has a nonzero value.
Thus Inequality~(\ref{noinc}) is not generally true. However, it is also straightforward to verify that
${\mathcal D}_q\left(\rho^{AC}\right)=0$ and ${\mathcal D}_q\left(\rho^{ABC}\right)={\mathcal D}_q\left(\rho^{AB}\right)$, thus
the monogamy inequality in (\ref{eq:GQDmono}) is still true for this case.

\section{Conclusion}
\label{Sec:Con}
Using Tsallis-$q$ entropy, we have proposed a class of quantum correlation measures, $q$-GQD,
and showed its nonnegativity for $0<q \leq 1$. We have also provided analytic expressions of
$q$-GQD for multi-qubit Werner-GHZ states and a class of three-parameter multi-qubit states by investigating the monotonicity
of Tsallis-$q$ entropy. We have further provided a sufficient condition for monogamy inequality of $q$-GQD in multi-party quantum systems.

As a one-parameter class of correlation measure, $q$-GQD contains GQD as a special case when $q=1$.
Moreover, we have shown that the relative order among quantum states in terms of correlation measures strongly depends on what entropy
function is used. In other words, the quantum correlations quantified by $q$-GQD
are seen in different manners by distinct entropic quantifiers.

The class of $q$-GQD monogamy inequality provided here also encapsulates the known case of
monogamy inequality in terms GQD as special cases~\cite{BROS}, as well as their explicit
relation with respect to the continuous parameter $q$.
We believe our result provides a useful methodology to understand quantum correlations in multi-party
quantum systems.

\section*{Acknowledgments}
This research was supported by Basic Science Research Program through the National Research Foundation of Korea(NRF)
funded by the Ministry of Education, Science and Technology(2012R1A1A1012246).
D.P.C. was partially supported by Cryptographic Hard Problems Initiatives.

\end{document}